\title{Sums of read-once formulas: How many summands suffice?}
\author{
Meena Mahajan 
\and {Anuj Tawari} 
}
\institute{%
  The Institute of Mathematical Sciences, Chennai, India.
  \email{\{meena,anujvt\}@imsc.res.in}
}
\begin{document}
\maketitle
\begin{abstract}
An arithmetic read-once formula (ROF) is a formula (circuit of fan-out
1) over 
$+, \times$ where each variable labels at most one leaf.
Every multilinear polynomial can be expressed as the sum of ROFs. 
In this work, we prove, for certain multilinear polynomials, 
a tight lower bound on the number of summands in such an expression.
\end{abstract}



\newtheorem{observation}[theorem]{Observation}
\newtheorem{claimn}[theorem]{Claim}
\newtheorem{fact}[theorem]{Fact}

\newcommand{\E}{{\mathbb E}}
\newcommand{\C}{{\mathbb C}}
\newcommand{\D}{{\mathcal D}}
\newcommand{\M}{{\mathcal M}}
\newcommand{\bigoh}{{\mathrm{O}}}
\newcommand{\smalloh}{{\mathrm{o}}}

\newenvironment{proof-sketch}{\noindent{\bf Sketch of Proof}\hspace*{1em}}{\qed\bigskip}


\newcommand{\enc}{{\sf Enc}}
\newcommand{\dec}{{\sf Dec}}
\newcommand{\Var}{{\rm Var}}
\newcommand{\Deg}{{\rm deg}}
\newcommand{\ROF}{{\rm ROF}}
\newcommand{\ROP}{{\rm ROP}}
\newcommand{\Z}{{\mathbb Z}}
\newcommand{\X}{{\mathbf x}}
\newcommand{\F}{{\mathbb F}}
\newcommand{\N}{{\mathbb N}}
\newcommand{\I}{{\mathcal I}}
\newcommand{\integers}{{\mathbb Z}^{\geq 0}}
\newcommand{\R}{{\mathbb R}}
\newcommand{\Q}{{\cal Q}}
\newcommand{\V}{{\mathbb V}}
\newcommand{\eqdef}{{\stackrel{\rm def}{=}}}
\newcommand{\from}{{\leftarrow}}
\newcommand{\vol}{{\rm Vol}}
\newcommand{\poly}{{\rm poly}}
\newcommand{\ip}[1]{{\langle #1 \rangle}}
\newcommand{\wt}{{\rm wt}}
\renewcommand{\vec}[1]{{\mathbf #1}}
\newcommand{\mspan}{{\rm span}}
\newcommand{\rs}{{\rm RS}}
\newcommand{\RM}{{\rm RM}}
\newcommand{\Had}{{\rm Had}}
\newcommand{\calc}{{\cal C}}
\renewcommand{\epsilon}{\varepsilon}
\renewcommand{\phi}{\varphi}

\section{Introduction}
\label{sec:intro}

Read-once formulas (ROF) are formulas (circuits of fan-out 1) in which
each variable appears at most once. A formula computing a polynomial
that depends on all its variables must read each variable at least
once. Therefore, ROFs compute some of the simplest possible functions that
depend on all of their variables. The polynomials computed by such
formulas are known as read-once polynomials (ROPs). Since every
variable is read at most once, ROPs are multilinear
 \footnote{A polynomial is said to be multilinear if the individual degree of 
 each variable is at most one.}. But not every
multilinear polynomial is a ROP. For example, $x_{1}x_{2} + x_{2}x_{3}
+ x_{1}x_{3}$.  

We investigate the following question: Given an $n$-variate
multilinear polynomial, can it be expressed as a sum of at most $k$
ROPs? It is easy to see that every bivariate multilinear polynomial is
a ROP. Any tri-variate multilinear polynomial can be expressed as a
sum of 2 ROPs. With a little thought, we can obtain a sum-of-3-ROPs
expression for any 4-variate multilinear polynomial.  An easy
induction on $n$ then shows that any $n$-variate multilinear
polynomial, for $n \ge 4$, can be written as a sum of at most $3
\times 2^{n-4}$ ROPs. Also, the sum of two multilinear monomials is a
ROP, so any $n$-variate multilinear polynomial with $M$ monomials can
be written as the sum of $\lceil M/2 \rceil$ ROPs.  
We ask the following question: Does there exist a
strict hierarchy among $k$-sums of ROPs?  Formally,
\begin{problem}
  Consider the family of $n$-variate multilinear polynomials. For
  $1 < k \le 3 \times 2^{n-4}$, is $\sum^{k} \cdot \ROP$ strictly more powerful than   
  $\sum^{k-1} \cdot \ROP$? If so, what explicit polynomials witness
  the separations?
\end{problem}
We answer this affirmatively for $k \le \lceil n/2 \rceil$. In
particular, for $k = \lceil n/2 \rceil$, there exists an explicit
$n$-variate multilinear polynomial which cannot be written as a sum of
less than $k$ ROPs but it admits a sum-of-$k$-ROPs representation.

Note that $n$-variate ROPs are computed by linear sized formulas. Thus
if an $n$-variate polynomial $p$ is in $\sum^k \cdot \ROP$, then $p$
is computed by a formula of size $O(kn)$ where every intermediate node
computes a multilinear polynomial. Since superpolynomial lower bounds
are already known for the model of multilinear formulas \cite{Raz09},
we know that for those polynomials (including the determinant and the
permanent), a $\sum^k \cdot \ROP$ expression must have $k$ at least
quasi-polynomial in $n$. However the best upper bound on $k$ for these
polynomials is only
exponential in $n$, leaving a big gap between the lower and upper
bound.   On the other hand, our lower bound is provably tight.

A counting argument shows that a random multilinear polynomial
requires exponentially many ROPs; there are multilinear polynomials
requiring $k = \Omega(2^n/n^2)$.  Our general upper bound on $k$ is
$O(2^n)$, leaving a gap between the lower and upper bound. One
challenge is to close this gap. A perhaps more interesting challenge
is to find explicit polynomials that require exponentially large $k$
in any $\sum^{k}\cdot \ROP$ expression. 

A natural question to ask is whether stronger lower bounds than the
above result can be proven. 
In particular, to separate 
$\sum^{k-1}\cdot \ROP$  from $\sum^{k}\cdot \ROP$, how many variables
are needed? The above hierarchy result says that 
$2k-1$ variables suffice, but there may be simpler polynomials (with
fewer variables) witnessing this separation. 
We demonstrate another technique which
improves upon the previous result for $k=3$, showing that 4 variables suffice. In particular, we show
that over the field of reals, there exists an explicit multilinear
$4$-variate multilinear polynomial which cannot be written as a sum
of $2$ ROPs. This lower bound is again tight, as there is a sum of
$3$ ROPs representation for 
every 4-variate multilinear polynomial. 

\subsection*{Our results and techniques}
\label{sec:our-results}

We now formally state our results.

\begin{theorem}
  \label{thm:hierarchy}
For each $n \ge 1$, the $n$-variate degree $n-1$ symmetric polynomial
$S_n^{n-1}$ cannot be written as a sum of less than $\lceil n/2
\rceil$ ROPs, but it can be written as a sum of $\lceil n/2 \rceil$
ROPs.
\end{theorem}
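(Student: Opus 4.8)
The plan is to split Theorem~\ref{thm:hierarchy} into two parts: the upper bound (exhibiting a sum of $\lceil n/2\rceil$ ROPs) and the lower bound (no fewer suffice). The upper bound should be the easy direction. Recall that $S_n^{n-1} = \sum_{i=1}^n \prod_{j \neq i} x_j$, the $(n-1)$-st elementary symmetric polynomial. My plan is to pair up the monomials: the monomial missing $x_i$ and the monomial missing $x_j$ can be combined, but a cleaner grouping is to note that $\prod_{j \neq i} x_j + \prod_{j \neq i'} x_j$ for two fixed indices factors nicely only in special cases, so instead I would try to write $S_n^{n-1}$ as a telescoping-style sum. Actually the slickest approach: observe $S_n^{n-1} = \prod_{j=1}^n x_j \cdot \sum_{i=1}^n (1/x_i)$ formally, and group the terms of the sum $\sum 1/x_i$ in pairs $(1/x_{2t-1} + 1/x_{2t}) = (x_{2t-1}+x_{2t})/(x_{2t-1}x_{2t})$; then $\prod_j x_j \cdot (x_{2t-1}+x_{2t})/(x_{2t-1}x_{2t})$ is a genuine polynomial, namely $(x_{2t-1}+x_{2t})\prod_{j \neq 2t-1, 2t} x_j$, which is a product of (sum of two variables) and (a bunch of single variables) --- hence a read-once formula. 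This gives $\lfloor n/2\rfloor$ ROPs, plus one extra leftover monomial when $n$ is odd, for a total of $\lceil n/2\rceil$ ROPs. I would write this out carefully, checking the parity bookkeeping.

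For the lower bound, the heart of the matter, I would argue by contradiction: suppose $S_n^{n-1} = \sum_{i=1}^{k} F_i$ with $k < \lceil n/2\rceil$, i.e.\ $k \le \lceil n/2 \rceil - 1$, each $F_i$ an ROP. The key structural fact to exploit is that an ROP, when restricted/projected appropriately, behaves tamely, whereas $S_n^{n-1}$ is ``spread out'' over all variables. Concretely, I would look at partial derivatives or at restrictions. Here is the approach I would push on: for any ROP $F$ in variables $x_1,\dots,x_n$, and any variable $x_i$, the second-order partial derivative structure is constrained --- in a read-once formula, two variables $x_i, x_j$ ``interact'' (i.e.\ $\partial^2 F/\partial x_i \partial x_j \neq 0$) only in a limited way governed by the formula tree; more usefully, define a graph $G_F$ on the variables where $i \sim j$ if $x_ix_j$ appears (with nonzero coefficient, after setting other variables suitably) --- for an ROP this graph has a special structure (it is a ``cograph''-like / laminar structure coming from the formula tree). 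But for $S_n^{n-1}$, consider the degree-2 part obtained by a suitable substitution: setting all variables to a generic constant, $\partial^2 S_n^{n-1}/\partial x_i \partial x_j$ evaluated appropriately is nonzero for every pair $i,j$, so the ``interaction graph'' is the complete graph $K_n$. The plan is to show $K_n$ cannot be covered by $k < \lceil n/2\rceil$ ROP-interaction-graphs, because each ROP contributes an interaction graph whose complement is connected (equivalently, each ROP's interaction graph, if it is the full variable set, must be disconnected in its complement / must miss a ``cut''), and a union of $k$ such graphs cannot be $K_n$ unless $k \ge \lceil n/2 \rceil$.

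More precisely, the combinatorial core I would isolate and prove is: if an ROP $F$ depends on a set $S$ of variables with $|S| = m \ge 2$, then its degree-2 interaction graph on $S$ is \emph{not} complete --- in fact its complement on $S$ is connected (this follows by induction on the formula structure: at a $+$ gate the supports are disjoint so no cross interactions, at a $\times$ gate the interaction graph is the join, whose complement is the disjoint union of the children's complements, hence connected only if... — I need to be careful here; the right invariant is probably that the complement of the interaction graph of an ROP on its full support is connected, with the base case being a single edge). Then, if $\sum_{i=1}^k F_i = S_n^{n-1}$, each monomial $x_ix_j$ of the degree-2 projection must come from some $F_i$, so $K_n = \bigcup_{i=1}^k G_{F_i}$ where each $G_{F_i}$ is a subgraph of $K_n$ realized by an ROP. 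The non-edges of $G_{F_i}$ form, by the structural lemma, a spanning connected subgraph of the complement --- wait, more simply: the \emph{non-edges} of each $G_{F_i}$ on the relevant vertex set contain a spanning tree's worth of the complement, so $G_{F_i}$ omits at least ... This needs the right counting. The cleanest version: each $G_{F_i}$ restricted to $[n]$ has an independent set or a partition witnessing a missing complete bipartite piece; a union of $k$ graphs each missing a ``balanced cut'' can cover $K_n$ only if $k \ge \log$ or $\ge n/2$ depending on the exact structure --- and since the theorem claims $\lceil n/2\rceil$, the structural lemma must force each $G_{F_i}$ to miss at least \emph{one specific edge from each of $\lceil n/2\rceil$ disjoint ``slots''}, or something of that flavor.

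The hard part, and where I expect to spend the most effort, is pinning down the exact structural lemma about ROP interaction graphs and matching it to the exact bound $\lceil n/2\rceil$ --- it is not enough that each $G_{F_i}$ is ``not complete,'' since $2$ non-complete graphs can already cover $K_n$; the right statement must be quantitatively strong, likely something like: for an ROP $F$ on support $S$, there is a vertex $v \in S$ and a nonempty $T \subseteq S \setminus \{v\}$ such that $v$ is non-adjacent in $G_F$ to \emph{every} vertex of $T$ (a ``star of non-edges''), and moreover a product-gate decomposition forces the non-edges to form a connected spanning subgraph of $\overline{G_F}$ on $S$; a connected graph on $m$ vertices has $\ge m-1$ edges, so $\overline{G_F}$ has $\ge |S|-1$ non-edges, giving $|E(G_F)| \le \binom{|S|}{2} - (|S|-1)$. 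Summing, $\binom{n}{2} = |E(K_n)| \le \sum_i |E(G_{F_i})| \le k\binom{n}{2} - k(n-1)$ only helps if the supports are full, so I would additionally argue we may assume each $F_i$ has full support $[n]$ (absorbing linear parts, or handling small-support $F_i$ separately by a dimension/induction argument). Then $\binom{n}{2} \le k(\binom{n}{2} - (n-1))$, i.e.\ $k \ge \binom{n}{2}/(\binom{n}{2}-(n-1)) = \frac{n(n-1)/2}{(n-1)(n-2)/2} = \frac{n}{n-2}$, which is far too weak --- so the simple edge-count is insufficient and I will genuinely need the \emph{connectivity} of the complement (not just its edge count) together with an inductive/recursive peeling argument to extract the factor $n/2$. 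I therefore anticipate the real proof replaces the edge-counting with an argument on a carefully chosen restriction: substitute values to kill all but a well-chosen sub-polynomial, reduce $n$ by $2$ and $k$ by $1$ at each step (each ROP can ``handle'' at most a bounded-size interaction after one variable is pinned), and conclude by induction that $k \ge \lceil n/2\rceil$; formalizing this reduction step is the main obstacle.
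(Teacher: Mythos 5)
Your upper bound is correct and is essentially the paper's: pair the $n$ monomials of $S_n^{n-1}$ two at a time, using $\prod_{j\neq 2t-1}x_j+\prod_{j\neq 2t}x_j=(x_{2t-1}+x_{2t})\prod_{j\neq 2t-1,2t}x_j$, which gives $\lceil n/2\rceil$ ROPs.

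For the lower bound you have the right starting object (the table of double partials $\partial_{x_a}\partial_{x_b}$) and the right target (``reduce $n$ by $2$ and $k$ by $1$ each step''), but the path you sketch has genuine gaps. First, the structural lemma you lean on is false: for a purely multiplicative ROP such as $x_1x_2\cdots x_n$, every pair has $\partial_{x_a}\partial_{x_b}\neq 0$, so the interaction graph is complete and its complement is empty, not connected; no edge-count or complement-connectivity bound on a single ROP can work. The paper instead uses a dichotomy you never reach. If some summand $f_m$ has $\partial_{x_a}\partial_{x_b}f_m=0$ for some pair, then applying $\partial_{x_a}\partial_{x_b}$ to the whole sum kills $f_m$ and reduces $(n,k)$ to $(n-2,k-1)$ in one move. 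Otherwise \emph{every} summand has a complete interaction graph, which by Fact~\ref{fact:mrop-partial} forces every $f_m$ to be a multiplicative ROP; in this case the reduction comes from Lemma~\ref{lem:mrops} (for any $x_i$ there exist $x_j,\gamma$ with $\partial_{x_j}(f_m)\mid_{x_i=\gamma}=0$), applied as $\partial_{x_j}$ followed by $x_i\to\gamma$, together with a separate degree argument for the degenerate subcase where every such $\gamma$ is $0$. Your plan has no analogue of this case split, and your proposed invariant does not detect the multiplicative case at all.

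Second, even granting the peeling move, the induction does not close on $\{S_m^{m-1}\}$ alone, and this is not a bookkeeping detail. Concretely, $\partial_{x_n}(S_n^{n-1})\mid_{x_{n-1}=\gamma}=S_{n-2}^{n-2}+\gamma S_{n-2}^{n-3}$, which for $\gamma\neq 0$ is \emph{not} of the form $S_{n-2}^{n-3}$; so after one step your inductive hypothesis no longer applies. One must strengthen the statement to the two-parameter family $\M_n^{\alpha,\beta}=\alpha S_n^n+\beta S_n^{n-1}$ with $\beta\neq 0$ (the paper's Lemma~\ref{lem:lower-bound}), which is closed under both $\partial_{x_n}$ and $x_n\to\gamma$. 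Identifying this strengthening is a missing idea in your proposal, not just an unformalized step.
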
 

The idea behind the lower bound is that if $g$ can be expressed as a sum of
less than $\lceil n/2 \rceil$ ROFs, then one of the ROFs can be
eliminated by taking partial derivative with respect to one variable
and substituting another by a field constant. We then use the
inductive hypothesis to arrive at a contradiction.  This approach
necessitates a stronger hypothesis than the statement of the theorem,
and we prove this stronger statement in Lemma~\ref{lem:lower-bound}
as part of Theorem~\ref{thm:hierarchy-general}.

\begin{theorem}
  \label{thm:hard-for-2-sum}
There is an explicit $4$-variate multilinear polynomial $f$ which
cannot be written as the sum of $2$ ROPs over $\R$. 
\end{theorem}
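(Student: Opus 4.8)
The plan is to fix one explicit multilinear polynomial $f$ in $x_1,x_2,x_3,x_4$ with a small number of carefully chosen real coefficients, and to show directly that no pair of ROPs sums to $f$ over $\R$; the refutation of the last (and hardest) family of candidate pairs is the only place where we genuinely use that the field is $\R$. First I would isolate the structural normal form of a $4$-variate ROP. After peeling off trivial scalings and translations coming from constant leaves, a ROP that depends on all of a variable set $V$ has a genuine top gate --- either a $+$-gate or a $\times$-gate --- whose two inputs are ROPs on a partition of $V$ into two nonempty parts; iterating, a $4$-variate ROP is, up to this normalization, either (i) a ROP in at most $3$ of the four variables, or (ii) $A+B$, or (iii) $A\cdot B$, where in (ii) and (iii) the variable set $\{x_1,x_2,x_3,x_4\}$ is split between $A$ and $B$ as $3{+}1$ or $2{+}2$, and $A,B$ are themselves in normal form. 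Fixing in addition the bipartition involved, there are only constantly many such shapes, so $f=R_1+R_2$ amounts to refuting a constant-size list of templates, each a polynomial identity in finitely many unknown real coefficients.

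I would then dispatch the easy shapes. If some $R_i$ involves at most two variables, then $f-R_i$ must be a single ROP; I would choose $f$ so that subtracting any $\le 2$-variable ROP never yields a ROP --- an arrangement I can make by keeping the quadratic interaction graph of $f$ on $\{1,2,3,4\}$ connected and its higher-degree part sufficiently generic, and by invoking the non-ROP criteria already used to prove the lower bound in Lemma~\ref{lem:lower-bound}. I would similarly exploit the derivative/restriction identity: if $f=R_1+R_2$ then for every variable $x_i$ and constant $c$, $\partial_{x_i}f|_{x_j=c}$ is a sum of two ROPs in the two remaining variables; while this alone cannot give a contradiction (every bivariate multilinear polynomial is a ROP --- this is exactly why ``another technique'' is needed), comparing such restrictions for several choices of $(i,j,c)$, equivalently tracking how the rank-one ``bipartite'' pattern forced by a $\times$-gate constrains the matched $2\times 2$ minors of the coefficient array of $f$, prunes the templates in shapes (i), (ii), and many of (iii).

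The crux, and the step I expect to be the main obstacle, is shape (iii) when both $R_1$ and $R_2$ are genuinely $4$-variate with $\times$ (or mixed $\times/+$) top gates. Here I would substitute the template forms with unknown real coefficients into $f=R_1+R_2$, expand, and equate coefficients monomial by monomial; each resulting small system I want to show is infeasible over $\R$ although feasible over $\C$, so that the two sides of the equation ``would like'' to factor but can only do so using a pair of complex-conjugate constants. Concretely, I expect the obstruction to surface when one matches a top-degree coefficient of $f$ together with a matched pair of quadratic coefficients: the constraints then force a relation of the shape $\alpha^2+\beta^2=-\gamma$ with $\gamma>0$ (a specific positive quantity read off from the coefficients of $f$), equivalently force a certain discriminant to be negative; no such real $\alpha,\beta$ exist, whereas over $\C$ they do, which is precisely why the hypothesis ``over $\R$'' is essential. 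The whole construction of $f$ is driven by the need to make \emph{every} surviving template die this way under one and the same choice of coefficients.

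Finally, to match the lower bound I would exhibit an explicit sum-of-$3$-ROPs expression for $f$ --- either by inspection for the chosen $f$ or from the general fact that every $4$-variate multilinear polynomial lies in $\sum^{3}\cdot\ROP$ --- so that $f\in\sum^{3}\cdot\ROP\setminus\sum^{2}\cdot\ROP$ over $\R$, the claimed separation. The two delicate points throughout are: (1) getting the ROP normal form precisely right so that the list of templates is genuinely exhaustive, with careful treatment of constant leaves and of degenerate blocks where a purported bipartition silently collapses; and (2) tuning the finitely many coefficients of $f$ so that all templates are refuted simultaneously, with at least one of them refuted only because we work over $\R$ rather than $\C$.
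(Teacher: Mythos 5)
Your high-level strategy matches the paper's: classify sum-of-two-ROF representations by the top-gate structure of the two summands, dispatch the ``easy'' shapes (some summand on $\le 3$ variables, or a $3{+}1$ split, or a $+$-gate top) by showing a carefully chosen $f$ survives them, and isolate the genuinely hard shape --- both summands are $4$-variate products over a $2{+}2$ bipartition --- where the field being $\R$ is finally used via a negative-discriminant obstruction. The paper's choice of $f$ is $f^{\alpha,\beta,\gamma}=\alpha(x_1x_2+x_3x_4)+\beta(x_1x_3+x_2x_4)+\gamma(x_1x_4+x_2x_3)$, e.g.\ $(\alpha,\beta,\gamma)=(2,4,5)$, which is \emph{purely quadratic}; the ``higher-degree part sufficiently generic'' idea you mention is unnecessary, and in fact generic high-degree terms would make the easy cases harder, not easier, to close. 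The paper also packages the easy cases much more economically: instead of hand-refuting a list of templates, it shows any $g\in\sum^2\cdot\ROP$ satisfies at least one of three observable conditions (a linear $2$-variate restriction; a linear dependence among $x_i,x_j,\partial_{x_i}g,\partial_{x_j}g,1$; or a $g=l_1l_2+l_3l_4$ normal form), and then shows $f^{\alpha,\beta,\gamma}$ fails all three. Checking the first two against a purely quadratic $f$ is immediate.

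The gap in your proposal is the hard case. You correctly predict that matching a top-degree coefficient against a matched pair of quadratic coefficients should force a negative discriminant, but you do not derive it, and the coefficient-matching system you propose to set up (all unknowns of both ROF templates at once) is large enough that it is not obvious a clean obstruction falls out. The paper's tool here is the commutator $\triangle_{12}f=(f|_{00})(f|_{11})-(f|_{01})(f|_{10})$, which is exactly the ``$2\times 2$ minor'' you gesture at. Lemma~\ref{lem:commutator} proves that if $f=l_1(x_1,x_2)l_2(x_3,x_4)+l_3(x_1,x_3)l_4(x_2,x_4)$ with the $l_i$ linear, then $l_2$ \emph{divides} $\triangle_{12}f$; so $\triangle_{12}f$, computed explicitly as $-\beta\gamma(x_3^2+x_4^2)+(\alpha^2-\beta^2-\gamma^2)x_3x_4$, must factor into linear forms over $\F$, which reduces the whole hard case to whether one explicit univariate quadratic has a root in $\F$. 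This is the missing ingredient that turns your expectation into a proof, and it is precisely why the paper phrases condition C3 as a statement about $X^2-d_i=0$ having no root. Without this (or an equivalent invariant), your ``expand and equate'' plan risks getting stuck in an unstructured nonlinear system.
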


The proof of this theorem mainly relies on a structural lemma
(Lemma~\ref{lem:structure-2-sum-4-variate}) for sum of $2$ read-once
formulas. In particular, we show that if $f$ can be written as a sum
of $2$ ROPs then one of the following must be true: 1.~Some 2-variate
restriction is a linear polynomial.  2.~There exist variables
$x_{i}, x_{j} \in \Var(f)$ such that the polynomials
$x_{i}, x_{j}, \partial_{x_{i}}(f), \partial_{x_{j}}(f), 1$ are linearly
dependent. 3.~We can represent $f$ as
$f = l_{1} \cdot l_{2} + l_{3} \cdot l_{4}$ where
$(l_{1}, l_{2})$ and $(l_{3}, l_{4})$ are
variable-disjoint linear forms. Checking the first two conditions is
easy. For the third condition we use the commutator of $f$, introduced 
in \cite{SV14}, to find one of the $l_{i}$'s. The knowledge of one
of the $l_{i}$'s suffices to determine all the linear forms. Finally,
we construct a $4$-variate polynomial which does not satisfy any of
the above mentioned conditions. This construction does not work over
algebraically closed fields. We do not yet know how to construct an
explicit 4-variate multilinear polynomial not expressible as the sum
of 2 ROPs over such fields, or even whether such polynomials exist.  

\subsection*{Related work}
\label{sec:related-work}

Despite their simplicity, ROFs have received a lot of attention both in the
arithmetic as well as in the Boolean world \cite{HH91,BHH95a,BB98,BC98,SV14,SV15}. The most
fundamental question that can be asked about polynomials is the
polynomial identity testing (PIT) problem: Given an arithmetic circuit
$\mathcal{C}$, is the polynomial computed by $\mathcal{C}$ identically
zero or not. PIT has a randomized polynomial time algorithm: Evaluate
the polynomial at random points. It is not known whether PIT has a
deterministic polynomial time algorithm. In 2004, Kabanets and
Impagliazzo established a connection between PIT algorithms and
proving general circuit lower bounds \cite{KI04}. However, for
restricted arithmetic circuits, no such result is known. For instance,
consider the case of multilinear formulas. Even though strong lower
bounds are known for this model, there is no efficient deterministic
PIT algorithm. For this reason, PIT was studied for the weaker model
of sum of read-once formulas. Notice that multilinear depth $3$
circuits are a special case of this model. 

Shpilka and Volkovich gave a deterministic PIT algorithm for the sum
of a small number of ROPs \cite{SV15}. Interestingly, their proof uses
a lower bound for a weaker model, that of $0$-justified ROFs  (setting
some variables to zero does not kill any other variables). In
particular, they show that the polynomial $\mathcal{M}_{n} =
x_{1}x_{2} \cdots x_{n}$, consisting of just a single monomial, cannot
be represented as a sum of less than $n/3$ weakly justified ROPs. More
recently, Kayal showed that if $\mathcal{M}_{n}$ is represented as a
sum of powers of low degree (at most $d$) polynomials, then the number
of summands is at most $\exp(\Omega(n/d))$ \cite{Kayal15}. He used this
lower bound to give a PIT algorithm. Our lower bound from
Theorem~\ref{thm:hierarchy} is orthogonal to both these results and is
provably tight. An interesting question is whether it can be used
to give a PIT algorithm.

Similar to ROPs, one may also study read-restricted formulas. 
For any number $k$, R$k$Fs are formulas that read every variable at
most $k$ times. For $k > 1$, R$k$Fs for $k\ge 2$ need not be
multilinear, and thus are strictly more powerful than ROPs. However,
even when restricted to multilinear polynomials, they are more
powerful; in \cite{AMV15}, Anderson, Melkebeek and Volkovich show that
there is a multilinear $n$-variate polynomial in R2F requiring $\Omega(n)$
summands when written as a sum of ROPs.

\subsection{Organization}
\label{sec:org}
The paper is organized as follows. In Section~\ref{sec:prelim} we give
the basic definitions and notations. In Section~\ref{sec:hierarchy},
we establish Theorem~\ref{thm:hierarchy}. 
showing that the hierarchy of $k$-sums of ROPs is proper. 
In Section~\ref{sec:hard-4-variate} we establish
Theorem~\ref{thm:hard-for-2-sum}, showing an explicit 4-variate
multilinear polynomial that is not expressible as the sum of two
ROPs. We conclude in 
Section~\ref{sec:concl} with some further questions that are still open.

\section{Preliminaries}
\label{sec:prelim}

For a positive integer $n$, we denote $ [n] = \{ 1,2, \ldots,
n\}$. For a polynomial $f$, by $\Var(f)$ we mean the set of variables
occurring in $f$. For a polynomial $f(x_{1}, x_{2}, \ldots, x_{n})$, a
variable $x_{i}$ and a field element $\alpha$, we denote by
$f \mid_{x_{i} = \alpha}$ the polynomial resulting from setting
$x_{i}= \alpha$. Let $f$ be an $n$-variate polynomial. We say that $g$
is a $k$-variate restriction of $f$ if $g$ is obtained by setting some
variables in $f$ to field constants and $|\Var(g)| \leq k$. A set of
polynomials $f_{1}, f_{2}, \ldots, f_{k}$ over the field $\F$ is said
to be linearly dependent if there exist constants $\alpha_{1},
\alpha_{2}, \ldots, \alpha_{k}$ such that $\displaystyle\sum_{i \in
  [k]} \alpha_{i} f_{i}  =0$. 

The $n$-variate degree $k$ elementary symmetric polynomial, denoted
$S_n^k$, is defined as follows:
\[ S_n^k(x_1, \ldots , x_n) = \sum_{A\subseteq [n], |A|=k} ~~~~\prod_{i\in
A} x_i.\]

A circuit is a directed acyclic graph with variables and field
constants labeling the leaves, field operations $+, \times$ labeling
internal nodes, and a designated sink node. Each node naturally
computes a polynomial; the polynomial at the designated sink node is
the polynomial computed by the circuit. If the underlying undirected
graph is a tree, then the circuit is called a formula. A formula
is said to be read-$k$ if each variable appears as  a leaf label at
most  $k$ times.

For read-once formulas, it is more convenient to use the following
``normal form'' from \cite{SV15}. 

\begin{definition}[Read-once formulas \cite{SV15}]
  \label{def:ROF}
A read-once arithmetic formula (ROF) over a field $\F$ in the
variables  $\{x_{1}, x_{2}, \allowbreak \ldots, x_{n}\}$ is a binary
tree as follows. The leaves are labeled by variables and internal
nodes by $\{+, \times\}$. In addition, every node is labeled by a pair
of field elements $(\alpha, \beta) \in \F^{2}$. Each input variable
labels at most once leaf. The computation is performed in the
following way. A leaf labeled by $x_{i}$ and $(\alpha, \beta)$
computes $\alpha x_{i} + \beta$. If a node $v$ is labeled by
$\star \in \{+, \times\}$ and $(\alpha, \beta)$ and its children compute the
polynomials $f_{1}$ and $f_{2}$, then $v$ computes
$\alpha (f_{1} \star f_{2}) + \beta$. 
\end{definition}
We say that $f$ is a read-once polynomial (ROP) if it can be computed
by a ROF, and is in $\sum^k \cdot \ROP$ if it can be
expressed as the sum of at most $k$ ROPs. 

\begin{proposition}
\label{prop:easy-upper-bound}
For every $n$, every $n$-variate multilinear polynomial can be written
as the sum of at most $\lceil 3 \times 2^{n-4}\rceil$ ROPs.
\end{proposition}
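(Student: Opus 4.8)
The plan is to proceed by induction on $n$, with the base cases handled explicitly for small $n$ and the inductive step splitting a polynomial along one variable. First I would verify the small cases: every $1$- or $2$-variate multilinear polynomial is itself a ROP (a bivariate multilinear polynomial $ax_1x_2 + bx_1 + cx_2 + d$ factors, after the usual case analysis on whether $a=0$, into an expression computable by a single ROF in normal form), so $1$ ROP suffices; every $3$-variate multilinear polynomial can be grouped as $(\text{terms divisible by }x_3) + (\text{terms not divisible by }x_3)$, i.e. $x_3 \cdot q(x_1,x_2) + r(x_1,x_2)$, and since each of $x_3\cdot q$ and $r$ is a ROP (the first because $q$ is bivariate hence a ROP, and multiplying a ROP by a fresh variable keeps it read-once), $2$ ROPs suffice; and for $4$ variables I would likewise write $f = x_4\cdot q(x_1,x_2,x_3) + r(x_1,x_2,x_3)$, express the trivariate $q$ as a sum of $2$ ROPs, note $x_4\cdot q$ is then a sum of $2$ ROPs, and add the single ROP $r$, giving $3$ ROPs. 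This establishes the claim for $n=4$, matching $\lceil 3\times 2^{0}\rceil = 3$.

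For the inductive step, assume every $(n-1)$-variate multilinear polynomial is a sum of at most $\lceil 3\times 2^{n-5}\rceil$ ROPs, and let $f$ be $n$-variate multilinear with $n \ge 5$. Write $f = x_n\cdot q + r$ where $q = \partial_{x_n} f$ and $r = f\mid_{x_n=0}$ are both $(n-1)$-variate multilinear polynomials. By the inductive hypothesis, $r = \sum_{i} \rho_i$ with at most $\lceil 3\times 2^{n-5}\rceil$ ROPs $\rho_i$, and $q = \sum_j \sigma_j$ with at most $\lceil 3\times 2^{n-5}\rceil$ ROPs $\sigma_j$. Since $x_n$ does not occur in any $\sigma_j$, each $x_n\cdot\sigma_j$ is computed by a ROF (take a product node whose children are the leaf $x_n$ and the ROF for $\sigma_j$). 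Hence $f = \sum_j (x_n\cdot\sigma_j) + \sum_i \rho_i$ is a sum of at most $2\lceil 3\times 2^{n-5}\rceil = \lceil 3\times 2^{n-4}\rceil$ ROPs, completing the induction. (When $n-5 \ge 1$ the ceiling is vacuous and the doubling is exact; I would just check the boundary arithmetic so the ceilings line up.)

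I do not expect any genuine obstacle here; the only points requiring a little care are the base cases. The one worth stating cleanly is that every bivariate multilinear polynomial is a ROP: this needs the normal-form definition of Definition~\ref{def:ROF}, where the leaf and node labels $(\alpha,\beta)$ let us absorb the linear and constant coefficients, so $ax_1x_2+bx_1+cx_2+d$ with $a\ne 0$ is $a(x_1 + c/a)(x_2 + b/a) + (d - bc/a)$, a single product node with appropriately labeled leaves and an additive shift at the root; the case $a = 0$ gives a linear form, also trivially a ROP. The other mild point is closure of ROPs under multiplication by a fresh variable, which is immediate from the tree structure. Everything else is bookkeeping on the recursion depth.
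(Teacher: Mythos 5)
Your inductive step (for $n \ge 5$) is exactly the paper's, and your base cases $n=1,2,3$ are fine. But the $n=4$ base case has a genuine gap, and it is precisely the hard part of this proposition. You write $f = x_4\cdot q + r$ with $q,r$ trivariate, express $q$ as a sum of $2$ ROPs so $x_4\cdot q$ is a sum of $2$ ROPs, and then ``add the single ROP $r$, giving $3$ ROPs.'' That last step is wrong: $r = f\mid_{x_4=0}$ is an \emph{arbitrary} trivariate multilinear polynomial, and such polynomials need not be ROPs --- the paper's own introduction gives $x_1x_2+x_2x_3+x_1x_3$ as a non-ROP. So $r$ also requires up to $2$ ROPs, and your decomposition yields $4$ ROPs for $n=4$, not $3$. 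Feeding $4$ into the recursion gives $4\cdot 2^{n-4} = 2^{n-2}$ ROPs overall, worse than the stated $3\cdot 2^{n-4}$.

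The paper instead establishes the $n=4$ case by a bespoke construction that does \emph{not} follow the $x_4 q + r$ split. It writes $f = \sum_{S\subseteq[4]} a_S x_S$ and handles two situations: if there are no degree-$2$ terms, it groups the linear part into one ROP, the monomials involving $x_1x_2$ (degree $\ge 3$) into a second, and those involving $x_3x_4$ into a third; if some $a_{ij}\neq 0$ (say $a_{13}$), it uses $a_{13}\neq 0$ to build a product-of-bivariates ROP that simultaneously absorbs many cross terms, plus a bivariate-in-$\{x_1,x_2\}$-and-$\{x_3,x_4\}$ ROP, plus a correcting ROP of the form $x_2x_4\cdot(\text{bivariate in }x_1,x_3)$. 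That nontrivial case analysis is what makes $3$ ROPs possible at $n=4$; the factor $3\cdot 2^{n-4}$ rather than $2^{n-2}$ in the statement is exactly the payoff. To repair your proof you would need to supply some such explicit $3$-ROP decomposition at $n=4$; the recursion alone cannot produce it.
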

\begin{proof}
For $n=1,2,3$  this is easy to see. 

For $n=4$, let $f(X)$ be given by the expression $\sum_{S \subseteq
[4]} a_S x_S$, where $x_S$ denotes the monomial $\prod_{i \in S}x_i$. 
We want to express $f$ as 
$f_1+f_2+f_3$, where each $f_i$ is an ROP. 
If there are no degree $2$ terms, we use the following: 
\begin{eqnarray*}
f_1 &=& a_\emptyset + a_1x_1 + a_2x_2 + a_3x_3 +a_4x_4 \\
f_2 &=& x_1x_2(a_{123}x_3 +a_{124}x_4) \\
f_3 &=& x_3x_4(a_{134}x_1 +a_{234}x_2 + a_{1234}x_1x_2)
\end{eqnarray*}
Otherwise, assume without loss of generality that $a_{13} \neq 0$. Then define
\begin{eqnarray*}
f_1 &=& \left[\displaystyle\sum_{S \subseteq [2]}
 a_{S} \displaystyle\prod_{i \in S} x_i\right]
 + \left[ \displaystyle\sum_{\emptyset\neq S \subseteq \{3,4\}}
 a_{S} \displaystyle\prod_{i \in S} x_i \right] \\
f_2 &=&  \left(a_{13} x_1 + a_{23} x_2 + a_{123} x_1x_2\right) \cdot
 \left(\frac{a_{14}}{a_{13}} x_4 + x_3 + \frac{a_{134}}{a_{13}}
 x_3x_4 \right) \\
f_3 &=&  x_2x_4 \left[
\left(a_{24} - \frac{a_{14}a_{23}}{a_{13}}\right) 
+ x_1 \left(a_{124} - \frac{a_{14}a_{123}}{a_{13}}\right) 
\right. \\
&& \left.
~~~~~~~~+ x_3\left(a_{234} - \frac{a_{134}a_{23}}{a_{13}}\right) 
+ x_1x_3\left(a_{1234} - \frac{a_{134}a_{123}}{a_{13}}\right)
\right]
 \end{eqnarray*}
Since any bivariate multilinear polynomial is a ROP, each $f_i$ is
indeed an ROP.

For $n > 4$, express $f$ as $x_n g + h$ where $g = \partial_{x_n}f$
and $h=f\mid _{x_n=0}$, and use induction, along with the fact that $g$
does not have variable $x_n$.
\qed\end{proof}

\begin{proposition}
\label{prop:easy-upper-bound2}
For every $n$, every $n$-variate multilinear polynomial with $M$
monomials can be written
as the sum of at most $\lceil \frac{M}{2} \rceil$ ROPs.
\end{proposition}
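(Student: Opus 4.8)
The plan is to reduce the whole statement to one observation: \emph{the sum of any two multilinear monomials, each possibly carrying a field coefficient, is a ROP}. Granting this, the proposition is immediate. Given an $n$-variate multilinear polynomial $f$ with $M$ monomials, partition its monomials arbitrarily into $\lceil M/2 \rceil$ groups, all of size $2$ except possibly one group of size $1$ when $M$ is odd. A lone monomial $a\prod_{i\in S}x_i$ is a ROP (a read-once product tree over the variables of $S$, with the coefficient $a$ absorbed into the $(\alpha,\beta)$ label at its root; for $S=\emptyset$ it is a constant, which the normal form of Definition~\ref{def:ROF} computes by a leaf labeled $(0,a)$), and by the observation every size-$2$ group is a ROP. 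Summing these at most $\lceil M/2 \rceil$ ROPs recovers $f$.

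So the only work is the observation. I would first treat the coefficient-free case. Given monomials $\prod_{i\in S}x_i$ and $\prod_{i\in T}x_i$, set $A=S\cap T$, $B=S\setminus T$, $C=T\setminus S$; these three sets are pairwise disjoint and their union is $S\cup T$, and $\prod_{i\in S}x_i+\prod_{i\in T}x_i=\left(\prod_{i\in A}x_i\right)\cdot\left(\prod_{i\in B}x_i+\prod_{i\in C}x_i\right)$. The first factor is a read-once product tree in the variables of $A$ (absent when $A=\emptyset$). For the second factor: if $B$ and $C$ are both nonempty it is a $+$-node whose children are read-once product trees over $B$ and over $C$, hence read-once on the disjoint union $B\cup C$; if exactly one of them is empty, say $B$, the factor is $1+\prod_{i\in C}x_i$, which is a read-once formula for $\prod_{i\in C}x_i$ with the label $(1,1)$ placed at its root; if both are empty the factor is the constant $2$. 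Joining the two factors at a $\times$-node is legal since their variable sets $A$ and $B\cup C$ are disjoint, and the result reads every variable of $S\cup T$ exactly once. Finally, coefficients are handled exactly as in the $n=4$ construction in the proof of Proposition~\ref{prop:easy-upper-bound}: write $a_S\prod_{i\in S}x_i+a_T\prod_{i\in T}x_i=\left(\prod_{i\in A}x_i\right)\left(a_S\prod_{i\in B}x_i+a_T\prod_{i\in C}x_i\right)$ and absorb $a_S,a_T$ into the leading $\alpha$-labels of the corresponding product subtrees.

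I do not expect a genuine obstacle here; the construction is elementary. The only thing that needs attention is the bookkeeping of degenerate cases --- an unpaired monomial when $M$ is odd, the empty monomial, equal monomials ($S=T$), and the normal-form conventions by which a ROF computes a bare constant or a single scaled variable. Once those are pinned down, the read-once property of each formula built above is transparent, since at every internal ($+$ or $\times$) node the two subtrees involve disjoint sets of variables.
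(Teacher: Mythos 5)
Your argument is exactly the paper's: factor $a_Sx_S + a_Tx_T$ as $x_{S\cap T}\bigl(a_S x_{S\setminus T} + a_T x_{T\setminus S}\bigr)$, observe that the two factors are variable-disjoint ROPs, and pair up the $M$ monomials arbitrarily. The paper states this in two lines; you spell out the degenerate cases, but the approach and the key observation are identical.
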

\begin{proof}
For $S\subseteq [n]$, let $x_S$ denote the multilinear monomial
$\prod_{i\in S}x_i$. For any $S, T \subseteq [n]$, the polynomial $ax_S
+ bx_T$ equals $x_{S\cap T}(ax_{S \setminus T} + bx_{T\setminus S})$ and
hence is an ROP. Pairing up monomials in
any way  gives the $\lceil \frac{M}{2} \rceil$ bound. 
\end{proof}

The partial derivative of a polynomial is defined naturally over
continuous domains. The definition can be extended in more than one
way over finite fields. However, for multilinear polynomials, these
definitions coincide. We consider only multilinear polynomials in this
paper, and the following formulation is most useful for us: The
partial derivative of a polynomial $p\in \F[x_{1}, x_{2}, \ldots ,
  x_{n}]$ with respect to a variable $x_i$, for $i\in [n]$, is given
by $\partial_{x_{i}}(p) \triangleq p \mid_{x_i=1} - p \mid_{x_i=0}$.
For multilinear polynomials, the sum, product, and chain rules
continue to hold. 

\begin{fact}[Useful Fact about ROPs \cite{SV15}]
\label{fact:ROP}
The partial derivatives of ROPs are also ROPs.
\end{fact}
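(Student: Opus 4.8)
The plan is to prove this by structural induction on a read-once formula $\Phi$ computing the ROP $p$, in the slightly sharpened form that $\partial_{x_i}(p)$ is computed by a ROF all of whose leaf-variables lie in $\Var(p)\setminus\{x_i\}$. I would first record the elementary closure properties of ROPs that the induction uses: a field constant and the zero polynomial are ROPs (a single leaf with $\alpha=0$); applying an affine map $q\mapsto\gamma q+\delta$ over $\F$ to a ROP yields a ROP (adjust the root's $(\alpha,\beta)$ label); and if $g,h$ are ROPs on disjoint variable sets then $g+h$ and $g\cdot h$ are ROPs (join the two ROFs under a new $+$ or $\times$ node, which stays read-once precisely because the variable sets are disjoint).

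Next I would handle the trivial case $x_i\notin\Var(p)$, where $\partial_{x_i}(p)=p\mid_{x_i=1}-p\mid_{x_i=0}=0$ is a ROP, and the base case, where $\Phi$ is a single leaf labelled $x_i$ and $(\alpha,\beta)$, so $p=\alpha x_i+\beta$ and $\partial_{x_i}(p)=\alpha$ is a ROP over no variables. For the inductive step, let the root of $\Phi$ be labelled $\star\in\{+,\times\}$ and $(\alpha,\beta)$, with children computing $f_1,f_2$ via ROFs on disjoint variable sets whose union contains $x_i$. Since each variable labels at most one leaf, $x_i$ occurs in exactly one subtree, say $x_i\in\Var(f_1)$ and $x_i\notin\Var(f_2)$, so $\partial_{x_i}(f_2)=0$. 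If $\star=+$ then $p=\alpha(f_1+f_2)+\beta$, so $\partial_{x_i}(p)=\alpha\,\partial_{x_i}(f_1)$, which is a ROP over $\Var(f_1)\setminus\{x_i\}$ by the induction hypothesis (scaling by $\alpha$ preserves this). If $\star=\times$ then $p=\alpha(f_1 f_2)+\beta$, so by the product rule and $\partial_{x_i}(f_2)=0$ we get $\partial_{x_i}(p)=\alpha\, f_2\cdot\partial_{x_i}(f_1)$; here $\partial_{x_i}(f_1)$ is a ROP over $\Var(f_1)\setminus\{x_i\}$ by the induction hypothesis, which is disjoint from $\Var(f_2)$, so the product is a ROP, and scaling by $\alpha$ keeps it one. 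In each case the ROP produced uses only variables of $\Var(p)\setminus\{x_i\}$, which closes the induction.

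I do not expect a genuine obstacle here; the one point worth isolating is the bookkeeping observation that differentiating a subformula can only delete a variable, never introduce or duplicate one, which is exactly what keeps the constructed formula read-once, and that the normal-form labels $(\alpha,\beta)$ enter only through the harmless affine adjustments above. As a sanity check one can also read off a closed form: if $\ell$ is the unique leaf reading $x_i$ and $\ell=v_0,v_1,\ldots,v_m$ is the path from $\ell$ to the root, with $g_j$ the polynomial computed by the subtree hanging off $v_j$ on the side away from $v_{j-1}$, then the $g_j$ and $x_i$ are on pairwise disjoint variable sets and $\partial_{x_i}(p)$ equals a single field constant times $\prod_{j:\ v_j\text{ is a }\times\text{-node}}g_j$, which is visibly a ROP. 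I would state this as a standalone fact since it is invoked repeatedly in the proofs of both main theorems.
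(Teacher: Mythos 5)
The paper states this fact without proof, citing it from~\cite{SV15}, so there is no in-paper argument to compare against. Your structural induction is correct and is the standard proof: the strengthened hypothesis that $\partial_{x_i}(p)$ has a ROF whose leaves read only variables in $\Var(p)\setminus\{x_i\}$ is exactly what is needed to make the $\times$-case go through (variable-disjointness of $\partial_{x_i}(f_1)$ and $f_2$ is what keeps the joined formula read-once), and the base and trivial cases are handled properly within the paper's normal form (a leaf with $\alpha=0$ gives constants, including $0$). One small remark: your closed-form sanity check at the end -- that $\partial_{x_i}(p)$ is a field constant times the product of the off-path subtrees at $\times$-nodes -- is precisely the structure the paper exploits in its proof of Lemma~\ref{lem:mrops}, where the chain-rule expansion $\partial_{x_j}(g)=\partial_y(Q)\cdot(\alpha x_i+\beta)\cdot\partial_{x_j}(P_w)$ appears explicitly; so stating it as a standalone fact, as you suggest, would indeed unify the two places where this observation does work.
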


\begin{proposition}[3-variate ROPs]
\label{prop:3-var-ROP}
Let $f \in \F[x_{1}, x_{2}, x_{3}]$ be a $3$-variate ROP. Then there
exists $i \in [3]$ and $a \in \F$ such that $\deg(f\mid_{x_{i}= a})
\leq 1$. 
\end{proposition}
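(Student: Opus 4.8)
The plan is to read the claim off the syntactic structure of a read-once formula computing $f$. First I would observe that we may assume $f$ depends on all three of $x_1,x_2,x_3$: if it depends on at most one variable it is affine and there is nothing to prove, and the $2$-variate case is a degenerate instance of the argument below. Fix a ROF $\Phi$ for $f$ in the normal form of Definition~\ref{def:ROF} with as few leaves as possible. A leaf whose variable-coefficient is $0$ computes a field constant; such a leaf is redundant, since its effect can be folded into the $(\alpha,\beta)$-labels of its ancestors, contradicting minimality. Hence every leaf of $\Phi$ reads its variable nontrivially, and since each variable labels at most one leaf and all three variables occur, $\Phi$ has exactly three leaves and therefore exactly two internal nodes. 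Up to renaming the variables, $\Phi$ thus computes, for some $\{i,j,k\}=\{1,2,3\}$, gates $\star_1,\star_2\in\{+,\times\}$, and field constants with $a_i,a_j,a_k$ all nonzero,
\[
f \;=\; \gamma\Bigl(\,(a_k x_k + b_k)\;\star_2\;\bigl[\,\alpha\bigl((a_i x_i + b_i)\,\star_1\,(a_j x_j + b_j)\bigr)+\beta\,\bigr]\,\Bigr)+\delta .
\]

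Next I would dispatch the three cases for the gate pair. If $\star_1=\star_2=+$, then $f$ is itself an affine polynomial, so $\deg f\le 1$ and any restriction (say $x_1=0$) works. If $\star_2=\times$, substitute $x_k=-b_k/a_k$, which is legitimate because $a_k\neq 0$; this annihilates the factor $a_kx_k+b_k$, so $f\mid_{x_k=-b_k/a_k}=\delta$, a constant. Finally, if $\star_2=+$ and $\star_1=\times$, the unique source of a degree-$2$ term in $f$ is the product $(a_ix_i+b_i)(a_jx_j+b_j)$; substituting $x_i=-b_i/a_i$ kills that product and leaves $f\mid_{x_i=-b_i/a_i}$ affine in $x_j$ and $x_k$. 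In each case we have produced an index $i$ and a constant $a$ with $\deg(f\mid_{x_i=a})\le 1$.

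The only real work is the bookkeeping of the first paragraph: justifying that a leaf-minimal ROF for a genuinely $3$-variate polynomial has exactly three leaves, so that the two-internal-node picture is exhaustive. I expect this, rather than the case analysis, to be the main obstacle to write cleanly, since one must be careful about constant leaves and about what ``$3$-variate'' means. Once the normal form is fixed, the case analysis is forced: on three variables a ROF can manufacture a monomial of degree $\ge 2$ only through a multiplication gate, and every multiplication gate has a subexpression that a single substitution turns into $0$.
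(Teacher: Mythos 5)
Your proof is correct and takes essentially the same approach as the paper: write $f$ as a ROF, case on whether the gate combining the third variable is $+$ or $\times$, and pick a substitution that kills the only possible multiplication. The paper's proof is more compact --- it writes $f = f_1(x_1) \star f_2(x_2,x_3) + c$ and treats $f_2$ as a black-box ROP rather than unwinding the formula to all three leaves --- but the underlying argument is the same.
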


\begin{proof}
Assume without loss of generality that $f = f_{1}(x_{1}) \star
f_{2}(x_{2}, x_{3}) + c$ where $\star \in \{+, \times\}$ and $c \in
\F$. If $\star = +$, then for all $a \in \F$, $\deg(f\mid_{x_{2}=a})
\leq 1$. If $\star = \times$, $\deg(f\mid_{f_{1} = 0})\leq 1$.  
\qed \end{proof}

We will also be dealing with a special case of ROFs called
multiplicative ROFs defined below: 

\begin{definition}[Multiplicative Read-once formulas]
\label{def:m-rof}
  A ROF is said to be a multiplicative ROF if it does not contain any
addition gates. We say that $f$ is a multiplicative ROP if it can be
computed by a multiplicative ROF. 
\end{definition}

\begin{fact}[\cite{SV15} (Lemma 3.10)]
\label{fact:mrop-partial}
A ROP $p$ is a  multiplicative ROP if and only if for any two
variables $x_i, x_j \in \Var(p)$, $\partial_{x_i}\partial_{x_j}(p) \neq 0$. 
\end{fact}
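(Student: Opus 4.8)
The plan is to prove both directions of the equivalence by structural induction on a read-once formula for $p$, using that $\F[\X]$ is an integral domain and that partial derivatives of multilinear polynomials obey the sum and product rules quoted above. We may assume $|\Var(p)|\ge 2$, since otherwise the ``for any two variables'' condition is vacuous. First I would reduce to a \emph{minimal} ROF $\Phi$ for $p$ (one with the fewest leaves): a short pruning argument shows that in a minimal $\Phi$ no internal node computes a constant, every leaf $x_k$ carries a label $(\alpha_k,\beta_k)$ with $\alpha_k\neq 0$, and consequently (tracing the path from a leaf up to the root) the variable set of the polynomial computed at any node $w$ equals exactly the set of variables labelling leaves below $w$; in particular every variable appearing in $\Phi$ lies in $\Var(p)$. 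I will also use two elementary facts: for multilinear $q$, $x_i\in\Var(q)$ iff $\partial_{x_i}(q)\neq 0$; and if $q_1,q_2$ are multilinear on disjoint variable sets with $x_i,x_j\notin\Var(q_2)$, then $\partial_{x_i}\partial_{x_j}(q_1q_2)=\partial_{x_i}\partial_{x_j}(q_1)\cdot q_2$ and $\partial_{x_i}\partial_{x_j}(q_1+q_2)=\partial_{x_i}\partial_{x_j}(q_1)$.

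For the forward direction, let $\Phi$ be a minimal multiplicative ROF for $p$, so every internal node is a $\times$ gate. Fix distinct $x_i,x_j\in\Var(p)$ and let $v$ be the least common ancestor of the leaves labelled $x_i$ and $x_j$. Then $v$ is a $\times$ gate whose children compute multilinear polynomials $f_1,f_2$ on disjoint variable sets with $x_i\in\Var(f_1)$, $x_j\in\Var(f_2)$, so the polynomial at $v$ is $\alpha_v(f_1\times f_2)+\beta_v$ and $\partial_{x_i}\partial_{x_j}$ of it equals $\alpha_v\cdot\partial_{x_i}(f_1)\cdot\partial_{x_j}(f_2)$, which is a product of nonzero elements of $\F[\X]$ (the scalar is nonzero since no internal node of a minimal ROF is constant, and the two derivatives are nonzero by the first elementary fact). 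Walking up from $v$ to the root, at each $\times$ gate $w$ with polynomial $\alpha_w(g_1\times g_2)+\beta_w$, where $g_1$ is the polynomial computed below, $\partial_{x_i}\partial_{x_j}$ becomes $\alpha_w\cdot g_2\cdot\partial_{x_i}\partial_{x_j}(g_1)$; minimality forces $\alpha_w\neq 0$ and $g_2\not\equiv 0$, so the value stays nonzero. Hence $\partial_{x_i}\partial_{x_j}(p)\neq 0$.

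For the converse I prove the contrapositive: if $p$ is an ROP that is not a multiplicative ROP, then some second derivative vanishes. Take a minimal ROF $\Phi$ for $p$; since no ROF for $p$ is multiplicative, $\Phi$ has a $+$ gate $u$, computing $\alpha_u(h_1+h_2)+\beta_u$. By minimality both children of $u$ are non-constant, hence each carries at least one variable; choose $x_i\in\Var(h_1)$ and $x_j\in\Var(h_2)$, which are distinct (read-once disjointness) and lie in $\Var(p)$. Since $x_j\notin\Var(h_1)$ and $x_i\notin\Var(h_2)$, the second elementary fact gives $\partial_{x_i}\partial_{x_j}(h_1)=\partial_{x_i}\partial_{x_j}(h_2)=0$, so $\partial_{x_i}\partial_{x_j}$ of the polynomial at $u$ is $0$. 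Walking up from $u$, at a $+$ gate the value is $\alpha_w(\partial_{x_i}\partial_{x_j}(g_1)+\partial_{x_i}\partial_{x_j}(g_2))$ and at a $\times$ gate it is $\alpha_w\cdot g_2\cdot\partial_{x_i}\partial_{x_j}(g_1)$, where $g_1$ is the polynomial below and $g_2$ the sibling; since $x_i,x_j\notin\Var(g_2)$ we have $\partial_{x_i}\partial_{x_j}(g_2)=0$, and $\partial_{x_i}\partial_{x_j}(g_1)=0$ by induction, so the value stays $0$ up to the root. Thus $\partial_{x_i}\partial_{x_j}(p)=0$.

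I expect the only real subtlety to be the reduction to a minimal ROF and the attendant bookkeeping — verifying that no subtree collapses to a constant, that all leaf coefficients are nonzero, and hence that the variable set at each node equals the set of leaf-variables below it. Once those structural facts are nailed down, both inductions are routine, driven only by the product and sum rules for partial derivatives of multilinear polynomials and by $\F[\X]$ being an integral domain.
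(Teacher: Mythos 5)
The paper does not actually prove this Fact; it imports it verbatim as Lemma~3.10 of \cite{SV15}. So there is no internal proof to compare against, and the right question is simply whether your argument is correct and self-contained. It is. Your reduction to a \emph{minimal} ROF (no constant subtrees, $\alpha\neq 0$ at every node, hence $\Var(P_w)$ equals the leaf-variable set below $w$) is exactly the bookkeeping one needs, and the rest of both directions is driven by the sum/product rules for multilinear partial derivatives and by $\F[\X]$ being an integral domain, both of which the paper explicitly licenses. The forward direction hinges on the LCA of the two leaves being a $\times$ gate whose children separate $x_i$ from $x_j$, giving a nonzero factorized second derivative that stays nonzero under the $\times$-only walk to the root; the converse picks $x_i,x_j$ separated by the two children of any $+$ gate (which must exist since $p$ is not a multiplicative ROP, so no ROF for it — including a minimal one — is $+$-free), and propagates a zero second derivative to the root. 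One small point worth making explicit in a polished write-up: when $|\Var(p)|\le 1$ both sides hold vacuously or trivially (a single leaf is a multiplicative ROF), which you do flag at the outset. This is essentially the argument in \cite{SV15}; you have reconstructed it faithfully.
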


Multiplicative ROPs have the following useful property, observed
in \cite{SV15}. (See Lemma 3.13 in \cite{SV15}. For completeness, and
since we refer to the proof later, we include a proof sketch here.)  
\begin{lemma}[\cite{SV15}]
  \label{lem:mrops}
Let $g$ be a multiplicative ROP with $|\Var(g)| \geq 2$. For every
$x_{i} \in \Var(g)$, there exists $x_{j} \in \Var(g)\setminus \{x_i\}$
and $\gamma \in \F$ 
such that $\partial_{x_{j}}(g) \mid_{x_{i} = \gamma} = 0$.
\end{lemma}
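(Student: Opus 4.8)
The plan is to use the structural characterization of multiplicative ROFs together with the fact that such a formula has a product node at the root. Since $g$ is a multiplicative ROP with at least two variables, by Definition~\ref{def:m-rof} its ROF has no addition gates, so the root is a multiplication node. Writing the computation at the root, we have $g = \alpha \cdot (g_1 \times g_2) + \beta$ where $g_1$ and $g_2$ are multiplicative ROPs on disjoint variable sets whose union is $\Var(g)$. (If $g$ has exactly two variables and both children are leaves, then $g_1 = a_1 x_p + b_1$ and $g_2 = a_2 x_q + b_2$; in general the children are themselves smaller multiplicative ROFs.)

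Given $x_i \in \Var(g)$, assume without loss of generality that $x_i \in \Var(g_1)$. The key point is to pick any variable $x_j \in \Var(g_2)$ — this set is nonempty and disjoint from $\Var(g_1)$, so in particular $x_j \neq x_i$. Now I would compute $\partial_{x_j}(g)$. Since $x_j$ does not appear in $g_1$ and the constants $\alpha, \beta$ are killed by the derivative, the product rule gives $\partial_{x_j}(g) = \alpha \cdot g_1 \cdot \partial_{x_j}(g_2)$ (using that $\partial_{x_j}$ commutes with multiplication by the $x_j$-free factor $\alpha g_1$). The crucial observation is that $g_1$ is a nonconstant polynomial, and being a ROP in at least the variable $x_i$, it can be made to vanish: concretely, $g_1$ as a ROF is either a single leaf $a x_i + b$ with $a \neq 0$, in which case setting $x_i = -b/a =: \gamma$ kills it, or it has a root node of its own; in the latter case I would instead appeal directly to the already-proven Lemma structure or argue by induction on the size of $g_1$ that some single substitution $x_i = \gamma$ forces $g_1 \mid_{x_i = \gamma} = 0$. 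Actually the cleanest route is: $g_1$ is a nonconstant univariate polynomial in $x_i$ after fixing all other variables of $g_1$ — but we want a substitution only in $x_i$. Since $g_1$ is a ROP containing $x_i$, write $g_1 = \alpha_1(g_{11} \star g_{12}) + \beta_1$ with $x_i$ in (say) $g_{11}$; if $\star = \times$ then $g_1 \mid_{g_{11} = 0}$ handles it after recursing into $g_{11}$, and if $\star = +$ we recurse into $g_{11}$ directly. Pushing this recursion to the leaf containing $x_i$, we find $\gamma \in \F$ with $g_1 \mid_{x_i = \gamma} = 0$.

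Putting the pieces together: with this $\gamma$, we get $\partial_{x_j}(g) \mid_{x_i = \gamma} = \alpha \cdot \left( g_1 \mid_{x_i = \gamma} \right) \cdot \left( \partial_{x_j}(g_2) \mid_{x_i = \gamma} \right) = \alpha \cdot 0 \cdot (\cdots) = 0$, because $x_i \notin \Var(g_2)$ means the substitution does not interfere with the third factor, and the middle factor is identically zero. This gives exactly the claimed $x_j \in \Var(g) \setminus \{x_i\}$ and $\gamma \in \F$ with $\partial_{x_j}(g) \mid_{x_i = \gamma} = 0$.

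The main obstacle I anticipate is the sub-argument that $g_1$ — the child of the root not containing the designated variable's sibling, i.e.\ the subformula containing $x_i$ — can be annihilated by a substitution to $x_i$ alone, rather than to several variables. This is really the assertion that a multiplicative ROF containing a variable $x_i$ has a value of $x_i$ making the whole formula zero; it follows by a short induction on formula size mirroring the proof of Proposition~\ref{prop:3-var-ROP} (at a $+$ node push into the child with $x_i$; at a $\times$ node push into the child with $x_i$; at a leaf $ax_i+b$ with $a\neq 0$ set $x_i = -b/a$), using crucially that there are no $+$ gates \emph{above} where it matters is not needed — even with the affine labels $(\alpha,\beta)$ the induction goes through since we only need $g_1\mid_{x_i=\gamma}=0$ as a factor, and the outer $(\alpha,\beta)$ of $g_1$ can be absorbed. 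Handling the affine node labels $(\alpha,\beta)$ cleanly throughout, so that derivatives and substitutions interact correctly, is the only real bookkeeping hazard.
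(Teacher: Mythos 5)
Your decomposition at the root is where the argument breaks. You write $g = \alpha(g_1 \times g_2) + \beta$, place $x_i$ in $\Var(g_1)$, choose $x_j \in \Var(g_2)$, and then claim that a single substitution $x_i = \gamma$ forces $g_1\mid_{x_i=\gamma} = 0$. This is false for multiplicative ROFs in the normal form of Definition~\ref{def:ROF}, precisely because of the additive shifts $\beta$ attached to internal $\times$ gates. Take $g_1 = x_1 x_2 + 1$ (a $\times$ gate with label $(1,1)$) and $x_i = x_1$: then $g_1\mid_{x_1=\gamma} = \gamma x_2 + 1$, which is never the zero polynomial. Your own recursive argument exposes the issue: at a $\times$ node $g_1 = \alpha_1(g_{11}\times g_{12}) + \beta_1$ with $x_i$ in $g_{11}$, zeroing $g_{11}$ gives $g_1 = \beta_1$, not $0$. (The case $\star = +$ you also discuss cannot even arise, since $g_1$ is multiplicative.) So in the example $g = (x_1 x_2 + 1)\cdot x_3$ with $x_i = x_1$, your forced choice $x_j = x_3$ gives $\partial_{x_3}(g) = x_1 x_2 + 1$, and no value of $x_1$ annihilates it; this is not bookkeeping, it is a counterexample to the step.

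The fix, which is what the paper's proof does, is to descend not to the root but to the parent $v$ of the leaf labeled $x_i$. Because $v$ is a $\times$ gate, the polynomial at $v$ is $P_v = (\alpha x_i + \beta)\cdot P_w$ with $\alpha \neq 0$, where $w$ is the sibling of the $x_i$ leaf. Choose $x_j$ from $\Var(P_w)$ (the sibling subtree, not the other side of the root). Replacing $v$ by a fresh variable $y$ gives a ROP $Q$ with $g = Q\mid_{y = P_v}$, and the chain rule yields $\partial_{x_j}(g) = \partial_y(Q)\cdot(\alpha x_i + \beta)\cdot\partial_{x_j}(P_w)$. Setting $\gamma = -\beta/\alpha$ kills the middle factor, hence the whole derivative. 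In my example this picks $x_j = x_2$ and $\gamma = 0$, and indeed $\partial_{x_2}(g) = x_1 x_3$ vanishes at $x_1 = 0$. The essential point you missed is that the linear form $\alpha x_i + \beta$ must appear as an explicit multiplicand in the derivative, and that only happens if $x_j$ lives in the subtree multiplied directly against $x_i$'s leaf, not merely in some subtree variable-disjoint from $x_i$.
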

\begin{proof}
Let $\phi$ be a multiplicative ROF computing $g$. 
Pick any $x_i \in \Var(g)$.
As $|\Var(\phi)| =
|\Var(g)| \geq 2$, $\phi$ has at least one gate. Let $v$ be the unique
neighbour (parent) of the leaf labeled by $x_{i}$, and let $w$ be the
other child of $v$. We denote by
$P_{v}(\bar{x})$ and $P_{w}(\bar{x})$ the ROPs  computed by $v$ and
$w$. Since $v$ is a $\times$ gate and we  use the normal form 
from Defintion~\ref{def:ROF}, $P_v$  is of the form 
$(\alpha x_i + \beta) \times P_w$ for some $\alpha \neq 0$. 

Replacing the output from $v$ by a new variable $y$, we obtain
from $\phi$ another multiplicative ROF $\psi$ in the variables
$\{y\} \cup \Var(g) \setminus \Var(P_v)$. Let $\psi$ compute the
polynomial $Q$; then $g = Q \mid_{y=P_v}$. 

Note that the sets $\Var(Q), \{x_i\}, \Var(P_w)$ are non-empty 
and disjoint, and form   a partition of $\{y, x_1, \ldots , x_n \}$.

By the chain rule, for every variable $x_{j} \in \Var(P_w)$ we have:
$$ \partial_{x_{j}}(g) = \partial_{y}(Q) \cdot
\partial_{x_{j}}(P_{v}) = \partial_{y}(Q) \cdot 
(\alpha x_i + \beta) \cdot
\partial_{x_{j}}(P_w) $$ 
It follows that for $\gamma = -\beta/\alpha $,
$\partial_{x_{j}}(g)\mid_{x_{i} = \gamma} = 0$.   
\qed \end{proof}

Along with partial derivatives, another operator that we will find
useful is the commutator of a polynomial.  The commutator of a
polynomial has previously been used for polynomial factorization and
in reconstruction algorithms for read-once formulas, see 
 \cite{SV14}.

\begin{definition}[Commutator \cite{SV14}]
  \label{def:commutator}
Let $P \in \F[x_{1}, x_{2}, \ldots, x_{n}]$ be a multilinear
polynomial and let $i, j \in [n]$. The commutator between
$x_{i}$ and $x_{j}$, denoted $\triangle_{ij}P$, is defined as follows.
$$ \triangle_{ij}P =
\left( P\mid_{x_{i}=0, x_{j}=0} \right) \cdot
\left( P\mid_{x_{i}=1, x_{j}=1} \right)
- \left( P\mid_{x_{i}=0, x_{j}=1} \right) \cdot
\left( P\mid_{x_{i}=1, x_{j}=0} \right)$$  
\end{definition}

The following property of the commutator will be useful to us.

\begin{lemma}
  \label{lem:commutator}
Let $f = l_{1}(x_{1}, x_{2}) \cdot l_{2}(x_{3}, x_{4}) + l_{3}(x_{1},
x_{3}) \cdot l_{4}(x_{2}, x_{4})$ where the $l_{i}$'s are linear
polynomials. Then $l_{2}$ divides $\triangle_{12}(f)$. 
\end{lemma}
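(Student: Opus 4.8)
The plan is to compute $\triangle_{12}(f)$ directly from Definition~\ref{def:commutator}, exploiting the fact that restricting $x_1$ and $x_2$ collapses three of the four linear factors. First I would record how each factor behaves under a substitution $x_1 = a$, $x_2 = b$ with $a,b \in \{0,1\}$. Since $l_1$ depends only on $x_1,x_2$, it becomes a field constant $\lambda_{ab} \eqdef l_1(a,b)$; since $l_3$ depends only on $x_1,x_3$, it becomes a univariate linear polynomial $g_a(x_3) \eqdef l_3(a,x_3)$; and since $l_4$ depends only on $x_2,x_4$, it becomes a univariate linear polynomial $h_b(x_4) \eqdef l_4(b,x_4)$. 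The factor $l_2$ is untouched. Hence
\[ f\mid_{x_1 = a,\; x_2 = b} \;=\; \lambda_{ab}\, l_2 \;+\; g_a\, h_b . \]

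Next I would plug these four restrictions into the definition of the commutator:
\[ \triangle_{12}(f) \;=\; \bigl(\lambda_{00}\, l_2 + g_0 h_0\bigr)\bigl(\lambda_{11}\, l_2 + g_1 h_1\bigr)
 \;-\; \bigl(\lambda_{01}\, l_2 + g_0 h_1\bigr)\bigl(\lambda_{10}\, l_2 + g_1 h_0\bigr). \]
Expanding both products, the two terms that carry no explicit factor of $l_2$ are $g_0 h_0 \cdot g_1 h_1$ and $g_0 h_1 \cdot g_1 h_0$; these are the same polynomial (just a reordering of the same product), so they cancel. What survives is
\[ \triangle_{12}(f) \;=\; \bigl(\lambda_{00}\lambda_{11} - \lambda_{01}\lambda_{10}\bigr)\, l_2^{\,2}
 \;+\; l_2 \cdot \bigl(\lambda_{00}\, g_1 h_1 + \lambda_{11}\, g_0 h_0 - \lambda_{01}\, g_1 h_0 - \lambda_{10}\, g_0 h_1\bigr), \]
and every term on the right-hand side is divisible by $l_2$. (One may note that the leading coefficient $\lambda_{00}\lambda_{11} - \lambda_{01}\lambda_{10}$ is precisely $\triangle_{12}(l_1)$, hence a field constant, so the first summand contributes $l_2$ times a polynomial as well.) This yields $l_2 \mid \triangle_{12}(f)$.

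There is no genuine obstacle here; the proof is a short computation, and the only insight needed is structural: restricting $x_1,x_2$ turns $l_1$ into a scalar and turns $l_3, l_4$ into polynomials in the \emph{disjoint} variables $x_3$ and $x_4$, so the single contribution to $\triangle_{12}(f)$ that does not manifestly contain $l_2$ is the product of all four univariate pieces, which is the same in both halves of the commutator and therefore cancels. The only thing to watch is the bookkeeping in the expansion — verifying that exactly those cross terms cancel and that what remains does factor through $l_2$.
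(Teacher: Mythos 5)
Your proof is correct and follows essentially the same route as the paper's: expand the definition of $\triangle_{12}(f)$ using the fact that restricting $x_1,x_2$ turns $l_1$ into a scalar, leaves $l_2$ untouched, and specializes $l_3,l_4$ to univariate polynomials in the disjoint variables $x_3,x_4$, then observe that the single $l_2$-free cross term cancels. The paper's version makes this cancellation a separate preliminary observation ($\triangle_{12}(l_3 l_4)=0$) and works with explicit coefficients of the $l_i$, whereas your parametrized notation $\lambda_{ab}, g_a, h_b$ is a cleaner bookkeeping of the identical computation.
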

\begin{proof}
First, we show that $\triangle_{12}(l_{3} \cdot l_{4}) = 0$. Assume
$l_{3} = C x_{1} + m$ and $l_{4} = D x_{2} + n$ where $C, D \in \F$
and $m, n$ are linear polynomials in $x_3$, $x_4$ respectively. 
By definition,
$\triangle_{12}(l_{3} \cdot l_{4}) = mn(C + m)(D + n) -
m(D + n)(C + m)n = 0$. 

Now we write $\triangle_{12}f$ explicitly.
Let $l_{1} = ax_{1} + bx_{2} + c$. By definition,  
\begin{align*}
\begin{aligned}
\triangle_{12}f &= \triangle_{12}(l_{1}l_{2} + l_{3}l_{4}) \\
&= (cl_{2}+ mn)((a+b+c)l_{2} + (C + m)(D + n)) - \\
&~~~ ((b+c)l_{2}+m(D + n))\cdot ((a+c)l_{2}+ n(C + m)) \\ 
&= l_{2}^{2} (c(a+b+c) - (a+c)(b+c))  \\
&~~~ + l_{2}(c(C + m)(D + n) +  mn(a+b+c) - n(b+c)(C + m) -m(a+c)(D + n)) 
 \end{aligned}
\end{align*}

It follows that $l_{2}$ divides $\triangle_{12}f$.
\qed \end{proof}

\section{A proper hierarchy in $\sum^k \cdot \textrm{ROP}$ }
\label{sec:hierarchy}
This section is devoted to proving Theorem~\ref{thm:hierarchy}.


We prove the lower bound for $S_n^{n-1}$ by induction. This
necessitates a stronger induction hypothesis,  so we will actually
prove the lower bound for a larger class of polynomials. 
For any $\alpha,\beta \in \F$, we define the polynomial
$\M_n^{\alpha,\beta} = \alpha S_{n}^{n} + \beta S_{n}^{n-1}$. 
We note the following recursive structure of $\M_n^{\alpha,\beta}$:
\begin{eqnarray*}
(\M_{n}^{\alpha,\beta}) \mid_{x_n=\gamma}  &=&
\M_{n-1}^{\alpha\gamma+\beta,\beta\gamma} ~~.  \\
\partial_{x_{n}}(\M_{n}^{\alpha,\beta})  &=& \M_{n-1}^{\alpha,\beta} ~~. 
\end{eqnarray*}

We show below that each $\M_n^{\alpha,\beta}$ is expressible as the
sum of $\lceil n/2 \rceil$ ROPs (Lemma~\ref{lem:upper-bound});
however, for any non-zero $\beta \neq 0$, $\M_n^{\alpha,\beta}$ cannot
be written as the sum of fewer than $\lceil n/2 \rceil$ ROPs
(Lemma~\ref{lem:lower-bound}).   At $\alpha=0$, $\beta=1$, we get
$S^{n-1}_n$, the simplest such polynomials, 
establishing Theorem~\ref{thm:hierarchy}.

\begin{lemma}
  \label{lem:lower-bound}
  Let $\F$ be a field. For every $\alpha \in \F$ and $\beta \in \F
  \setminus \{0\}$, the polynomial $\M_{n}^{\alpha,\beta} =
  \alpha S_{n}^{n} + \beta S_{n}^{n-1}$ cannot be written as a sum of
  $k < n/2$ ROPs.
\end{lemma}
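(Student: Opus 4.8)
The plan is to induct on $n$, using the recursive structure of $\M_n^{\alpha,\beta}$ recorded just above the statement. The base cases $n = 1, 2$ are immediate: for $n < 2$ the claim $k < n/2$ forces $k \le 0$, so there is nothing to express, and $\M_n^{\alpha,\beta}$ is not identically zero whenever $\beta \neq 0$. For the inductive step, suppose toward a contradiction that $\M_n^{\alpha,\beta} = \sum_{i=1}^k \phi_i$ with $k < n/2$, each $\phi_i$ an ROP. The key idea, as the authors foreshadow, is to \emph{eliminate one summand} by a restriction that kills it, so that we are left with an expression of $\M_{n-?}^{\alpha',\beta'}$ as a sum of $k-1$ ROPs in a suitably smaller number of variables, contradicting the inductive hypothesis (which we have arranged to hold for \emph{all} $\alpha', \beta'$ with $\beta' \neq 0$ — this is exactly why the stronger hypothesis is needed).

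First I would dispose of the case where some $\phi_i$ is a constant or depends on at most one variable: such a summand can be folded into another or simply contributes nothing to the degree-$(n-1)$ and degree-$n$ structure, and one checks that $\M_n^{\alpha,\beta}$ restricted appropriately is still of the form $\M^{\alpha',\beta'}$ with $\beta' \neq 0$; actually it is cleaner to handle this uniformly below. So assume every $\phi_i$ has at least two variables. Pick the summand $\phi_1$. By the normal form of Definition~\ref{def:ROF}, the top gate of $\phi_1$ is $+$ or $\times$. If it is a $\times$ gate, $\phi_1 = \alpha_1(\psi_1 \cdot \psi_2) + \beta_1$ with $\Var(\psi_1), \Var(\psi_2)$ disjoint and nonempty; choosing a variable $x_i \in \Var(\psi_1)$ and setting it to a field constant that makes the linear form at its leaf vanish drives $\psi_1$ to $0$, hence $\phi_1$ to the constant $\beta_1$, which can then be absorbed. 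If the top gate is $+$, $\phi_1 = \alpha_1(\psi_1 + \psi_2) + \beta_1$; here picking $x_i \in \Var(\psi_1)$ and $x_j \in \Var(\psi_2)$ and taking the partial derivative $\partial_{x_i}\partial_{x_j}$ annihilates $\phi_1$ entirely (the two halves live on disjoint variable sets, and a second derivative hits at most one of them). Either operation reduces the summand count to $k-1$.

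The real work is checking that the \emph{target} polynomial transforms correctly and that the variable count drops by the right amount. Using the two displayed identities: $\partial_{x_n}(\M_n^{\alpha,\beta}) = \M_{n-1}^{\alpha,\beta}$ (still with $\beta \neq 0$) and $(\M_n^{\alpha,\beta})\mid_{x_n = \gamma} = \M_{n-1}^{\alpha\gamma+\beta,\,\beta\gamma}$ — note that for $\gamma \neq 0$ the new second coefficient $\beta\gamma$ is nonzero, so we must be careful to substitute a nonzero constant, and similarly a single partial derivative with respect to a variable \emph{not} appearing in $\phi_1$ (or a substitution into such a variable) commutes past the elimination. Iterating: in the $\times$-gate case a single substitution removes $\phi_1$ but only one variable, and I would then additionally take a partial derivative in another variable outside $\phi_1$ to match parity; in the $+$-gate case the double derivative removes two variables at once. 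In every case one reaches an expression for some $\M_m^{\alpha',\beta'}$ with $\beta' \neq 0$ and $m \ge n-2 > 2(k-1)$, i.e. $k - 1 < m/2$, as a sum of $k-1$ ROPs — contradicting the inductive hypothesis. The main obstacle, and the place demanding the most care, is the bookkeeping: ensuring that the variables used to kill $\phi_1$ are genuinely outside the summand we are \emph{not} eliminating, that the chosen constants keep $\beta' \neq 0$, and that the net drop in variable count is at most $2$ while the drop in summand count is exactly $1$, so the induction's inequality $k < n/2$ is preserved. Handling the degenerate small-variable summands so that they do not spoil this count is the fiddly part; once the elimination-and-count scheme is set up correctly, the argument closes cleanly.
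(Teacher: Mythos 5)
Your plan is in the right spirit (induct on $n$ with the strengthened hypothesis over all $\alpha', \beta'$ with $\beta' \neq 0$, and try to eliminate one summand), and your $+$-gate case is correct and is exactly the paper's first move: if the top gate of some $\phi_1$ is $+$, pick $x_i, x_j$ in the two variable-disjoint subtrees, so $\partial_{x_i}\partial_{x_j}(\phi_1) = 0$, and apply the induction to $\M_{n-2}^{\alpha,\beta} = \partial_{x_i}\partial_{x_j}(\M_n^{\alpha,\beta})$ written as a sum of $k-1$ ROPs. But your $\times$-gate elimination has a genuine gap. You write that choosing $x_i \in \Var(\psi_1)$ and setting it so that the linear form at its leaf vanishes ``drives $\psi_1$ to $0$''. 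That is false in general: killing a leaf kills the whole factor $\psi_1$ only when every gate on the path from that leaf to the root of $\psi_1$ is a $\times$ gate. If $\psi_1 = x_1 + x_2$, setting $x_1 = 0$ leaves $\psi_1 = x_2 \neq 0$. More generally an ROP need not vanish identically under any single-variable substitution, so a bare substitution cannot be counted on to annihilate the product $\psi_1 \cdot \psi_2$.

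The paper's proof avoids this by first establishing (via the double-derivative argument, applied to \emph{every} summand and \emph{every} pair of variables) that each $f_m$ must be a \emph{multiplicative} ROP --- no $+$ gates anywhere, not just at the root. It then invokes Lemma~\ref{lem:mrops}, which for a multiplicative ROP gives a combined operation, ``take $\partial_{x_b}$ and then substitute $x_a = \gamma$'', that is guaranteed to annihilate the summand; this is strictly stronger than a bare substitution and is what actually closes the argument. Finally there is a residual case your sketch does not anticipate: if the only $\gamma$ that works is always $0$ (so that $\beta\gamma = 0$ and the induction cannot be applied), the paper observes that then every leaf's linear form is a scalar multiple of its variable, so setting $x_n = 0$ drops the degree of each multiplicative summand by at least $2$, making $\deg(f|_{x_n=0}) \le n-2$, while $\M_n^{\alpha,\beta}|_{x_n=0} = \beta S_{n-1}^{n-1}$ has degree $n-1$ --- a contradiction that needs the hypothesis $\beta \neq 0$. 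Your proposal flags the $\gamma = 0$ worry but offers no mechanism to resolve it; the degree argument is the missing ingredient.
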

\begin{proof}
The proof is by induction on $n$. The cases $n = 1,2$ are easy to
see. We now assume that $k \geq 1$ and $n > 2k$. Assume to the
contrary that there are ROPs $f_{1}, f_{2}, \ldots, f_{k}$ over
$\F[x_{1}, x_{2}, \ldots, x_{n}]$ such that
$\displaystyle f \triangleq \sum_{m \in  [k]} f_{m}
= \M_{n}^{\alpha,\beta}$.
The main steps in the  proof are as follows:
\begin{enumerate}
\item Show using the inductive hypothesis that for all $m \in [k]$ and
  $a,b \in [n]$, $\partial_{x_{a}}\partial_{x_{b}}(f_{m}) \neq 0$.
\item Conclude that for all $m \in [k]$, $f_{m}$ must be a
  multiplicative ROP. That is, the ROF computing $f_{m}$ does not
  contain any addition gate. 
\item Use the multiplicative property of $f_{k}$ to show that $f_{k}$
  can be eliminated by taking partial derivative with respect to one
  variable and substituting another by a field constant. If this
  constant is non-zero, we contradict the inductive hypothesis.
\item Otherwise, use the sum of (multiplicative) ROPs representation of
  $\M_n^{\alpha,\beta}$ to show that the degree of $f$ can be made at
  most $(n-2)$ by setting one of the variables to zero. This
  contradicts our choice of $f$ since $\beta \neq 0$.
\end{enumerate}
We now proceed with the proof.
\begin{claimn} \label{cl:lower-bound}
For all $m \in [k]$ and $a,b \in [n]$,
$\partial_{x_{a}}\partial_{x_{b}}(f_{m}) \neq 0$. 
\end{claimn}
\begin{proof}
  Suppose to the contrary that
  $\partial_{x_{a}}\partial_{x_{b}}(f_{m}) =0$. Assume without loss of generality that
$a=n$, $b=n-1$, $m=k$, so 
  $\partial_{x_{n}}\partial_{x_{n-1}}(f_{k}) = 0$.
Then, 
\begin{align*}
\begin{aligned}
\M_{n}^{\alpha,\beta} &= f = \displaystyle\sum_{m=0}^{k} f_{m} &&\text{(by assumption)} \\
\partial_{x_{n}}\partial_{x_{n-1}}(\M_{n}^{\alpha,\beta}) &=
\displaystyle\sum_{m=0}^{k} \partial_{x_{n}}\partial_{x_{n-1}} (f_{m})
&&\text{(by additivity of partial derivative)} \\ 
\M_{n-2}^{\alpha,\beta} &= \displaystyle\sum_{m=0}^{k-1}
\partial_{x_{n}}\partial_{x_{n-1}} (f_{m}) &&\text{(by recursive
structure of $\M_n$,}\\
&&& \text{and since
  $\partial_{x_{n}}\partial_{x_{n-1}}(f_{k}) = 0$)} 
\end{aligned}
\end{align*}
Thus $\M_{n-2}^{\alpha,\beta}$ can be written as the sum of $k-1$
polynomials, each of which is a ROP
(by Fact~\ref{fact:ROP}).
By the inductive hypothesis, $2(k-1) \geq
(n-2)$. Therefore, $k \geq n/2$ contradicting our assumption. 
\qed \end{proof}

From Claim~\ref{cl:lower-bound} and Fact~\ref{fact:mrop-partial}, we
can conclude: 

\begin{observation}
  \label{obs:lower-bound}
For all $m \in [k]$, $f_{m}$ is a multiplicative ROP.
\end{observation}
Observation \ref{obs:lower-bound} and Lemma \ref{lem:mrops}
together imply that
for each $m \in [k]$ and $a\in [n]$, there exist $b \neq a
\in [n]$ and $\gamma \in \F$ such that
$\partial_{x_{b}}(f_{m})\mid_{x_{a} = \gamma} = 0$.  There are two
cases to consider.

First, consider the case when for some $m,a$ and the corresponding
$b,\gamma$, it turns out that $\gamma\neq 0$.  Assume without loss of
generality that $m=k$, $a=n-1$, $b=n$, so that
$\partial_{x_{n}}(f_{k})\mid_{x_{n-1} = \gamma} = 0$. (For other
indices the argument is symmetric.) Then
\begin{align*}
\begin{aligned}
\M_{n}^{\alpha,\beta} &= \displaystyle\sum_{i \in [k]} f_{i} &&\text{(by
  assumption)} \\ 
\partial_{x_{n}}(\M_{n}^{\alpha,\beta})\mid_{x_{n-1} = \gamma} &=
\displaystyle\sum_{i \in [k]} \partial_{x_{n}}(f_{i})\mid_{x_{n-1} =
  \gamma} &&\text{(by additivity of partial derivative)} \\ 
\M_{n-1}^{\alpha,\beta}\mid_{x_{n-1} = \gamma} &= \displaystyle\sum_{i \in
  [k-1]} \partial_{x_{n}}(f_{i})\mid_{x_{n-1} = \gamma} &&\text{(since
  $\gamma$ is chosen as per
  Lemma \ref{lem:mrops})} \\ 
\M_{n-2}^{\alpha\gamma+\beta,\beta\gamma} &= \displaystyle\sum_{i \in [k-1]}
\partial_{x_{n}}(f_{i})\mid_{x_{n-1} = \gamma} && \text{(recursive
  structure of $\M_n$)}
\end{aligned}
\end{align*}
Therefore, $\M_{n-2}^{\alpha\gamma+\beta,\beta\gamma}$ can be
written as a sum of at most $k-1$ polynomials, each of which is a ROP
(Fact~\ref{fact:ROP}). By the inductive hypothesis, 
$2(k-1) \geq n-2$ implying that $k \geq n/2$ contradicting our
assumption.

(Note: the term $\M_{n-2}^{\alpha\gamma+\beta,\beta\gamma}$ is what
necessitates a stronger induction hypothesis than working with just
$\alpha=0, \beta=1$.) 

It remains to handle the case when for all $m\in [k]$ and $a\in [n]$,
the corresponding value of $\gamma$ to some $x_b$ (as guaranteed by
Lemma~\ref{lem:mrops}) is $0$. Examining
the proof of Lemma~\ref{lem:mrops}, this implies
that each leaf node in any of the ROFs can be made zero only by setting
the corresponding variable to zero. That is, the linear forms at all
leaves are of the form $a_ix_i$. 

Since each $\phi_{m}$ is a multiplicative ROP, setting $x_n = 0$ makes
the variables in the polynomial computed at the sibling of the leaf
node $a_nx_n$ redundant. Hence setting $x_n=0$ reduces the degree of
each $f_m$ by at least 2. That is, $\Deg(f\mid_{x_n=0}) \leq n-2$. But
$\M_{n}^{\alpha,\beta}\mid_{x_n=0}$ equals $\M_{n-1}^{\beta,0} = \beta S_{n-1}^{n-1}$,
which has degree $n-1$, contradicting the asusmption that 
$f = \M_{n}^{\alpha,\beta}$. 
\qed\end{proof}

The following lemma shows that the above lower bound is indeed optimal.
\begin{lemma} \label{lem:upper-bound}
For any field $\F$ and $\alpha, \beta \in \F$, the polynomial $f = \alpha
S^{n}_{n} + \beta S^{n-1}_{n}$ can be written as a  sum of at most $\lceil
n/2 \rceil$ ROPs.  
\end{lemma}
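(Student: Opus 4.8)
The plan is to prove Lemma~\ref{lem:upper-bound} by induction on $n$, exploiting the recursive structure of $\M_n^{\alpha,\beta}$ that was recorded just before the statement, namely $(\M_n^{\alpha,\beta})\mid_{x_n=\gamma} = \M_{n-1}^{\alpha\gamma+\beta,\beta\gamma}$ and $\partial_{x_n}(\M_n^{\alpha,\beta}) = \M_{n-1}^{\alpha,\beta}$. The natural decomposition is $f = x_n \cdot g + h$ where $g = \partial_{x_n}f = \M_{n-1}^{\alpha,\beta}$ and $h = f\mid_{x_n=0} = \M_{n-1}^{\beta,0} = \beta S_{n-1}^{n-1} = \beta x_1 x_2 \cdots x_{n-1}$. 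The monomial $h$ is a single multiplicative ROP, and $x_n \cdot g$ is one ROP times one more variable. So a crude induction would give a recurrence $R(n) \le R(n-1) + 1$, which is far too weak; I need to reduce the variable count by $2$ per level, not $1$.

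So instead I would peel off two variables at a time. Write $f = \M_n^{\alpha,\beta}$ and separate its monomials by how they interact with $x_{n-1}$ and $x_n$. Every monomial of $S_n^n$ and $S_n^{n-1}$ has degree $n$ or $n-1$; in particular each monomial of $f$ omits at most one variable. Grouping by which of $x_{n-1},x_n$ appear, I would write $f = x_{n-1}x_n \cdot p + x_{n-1}\cdot q + x_n \cdot r + s$, where, because a monomial omits at most one variable, $q$ (monomials containing $x_{n-1}$ but not $x_n$) must contain all of $x_1,\dots,x_{n-2}$, i.e. $q$ is a scalar multiple of $x_1\cdots x_{n-2}$; similarly $r$ is a scalar multiple of $x_1\cdots x_{n-2}$, and $s$ (monomials containing neither) is $0$ unless $n-2 \ge n-1$, so $s=0$. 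Thus $f = x_{n-1}x_n\cdot p + (\beta x_{n-1} + \beta x_n)(x_1\cdots x_{n-2}) = x_{n-1}x_n \cdot p + \beta(x_{n-1}+x_n)\,x_1\cdots x_{n-2}$, where $p = \partial_{x_{n-1}}\partial_{x_n}f = \M_{n-2}^{\alpha,\beta}$. The second summand $\beta(x_{n-1}+x_n)x_1\cdots x_{n-2}$ is a single ROP (a product of linear forms in disjoint variables, which is even a multiplicative ROP up to the one addition gate at the top). And $x_{n-1}x_n\cdot p$ is $x_{n-1}x_n$ times a sum-of-$\lceil (n-2)/2\rceil$-ROPs expression for $\M_{n-2}^{\alpha,\beta}$ obtained from the induction hypothesis; multiplying each ROP summand by the disjoint-variable monomial $x_{n-1}x_n$ keeps each summand an ROP. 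This yields $R(n) \le R(n-2) + 1$, and with base cases $R(1) = R(2) = 1$ (bivariate multilinear polynomials are ROPs, and $\M_1^{\alpha,\beta}$, $\M_2^{\alpha,\beta}$ are ROPs), the recurrence solves to $R(n) \le \lceil n/2\rceil$, as desired.

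The one place that needs care is the bookkeeping for small or degenerate $n$ and for the coefficients: I should double-check that $\partial_{x_{n-1}}\partial_{x_n}(\alpha S_n^n + \beta S_n^{n-1})$ is indeed $\alpha S_{n-2}^{n-2} + \beta S_{n-2}^{n-3} = \M_{n-2}^{\alpha,\beta}$ (apply the recorded $\partial_{x_n}$ rule twice), and that the "leftover" terms $x_{n-1}\cdot q$ and $x_n\cdot r$ really both have coefficient exactly $\beta$ — this is because a monomial of $S_n^{n-1}$ missing $x_n$ is forced to contain $x_{n-1}$ and vice versa, while $S_n^n$ contributes nothing to these cross terms. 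I also need $n - 2 \ge 1$ for the monomial $x_1\cdots x_{n-2}$ to make sense, which is why $n=1,2$ are handled as separate base cases.

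I do not expect any genuine obstacle here; the only mild subtlety is recognizing that a monomial of $\M_n^{\alpha,\beta}$ omits at most one variable, which is exactly what forces the cross-terms $q$ and $r$ to be single monomials and makes the two-variables-at-a-time induction close with the tight bound $\lceil n/2\rceil$ rather than the trivial $n-1$ or $n$. A cleaner way to state the inductive step, which I would probably use in the writeup, is simply: $f = x_{n-1}x_n\,\M_{n-2}^{\alpha,\beta} + \beta(x_{n-1}+x_n)x_1x_2\cdots x_{n-2}$, verify this identity directly by comparing monomials, and then invoke Proposition-style closure of ROPs under multiplication by a monomial in fresh variables together with the induction hypothesis applied to $\M_{n-2}^{\alpha,\beta}$.
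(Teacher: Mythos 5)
Your proof is correct, and when unrolled it produces essentially the same decomposition the paper uses: the identity $\M_n^{\alpha,\beta} = x_{n-1}x_n\,\M_{n-2}^{\alpha,\beta} + \beta(x_{n-1}+x_n)\,x_1\cdots x_{n-2}$ is exactly the two-variables-per-summand pairing that underlies the paper's explicit construction for even $n$, and the recurrence $R(n)\le R(n-2)+1$ with $R(1)=R(2)=1$ gives the same $\lceil n/2\rceil$. The one genuine difference is organizational: the paper handles odd $n$ by a separate appeal to the monomial-pairing bound of Proposition~\ref{prop:easy-upper-bound2} and even $n$ by a direct non-inductive construction, whereas your single two-at-a-time induction covers both parities uniformly. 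That is a modest clean-up — one mechanism instead of two — at no cost in rigour. One small slip in your prose: $\beta(x_{n-1}+x_n)x_1\cdots x_{n-2}$ is not a multiplicative ROP (it has an addition gate at the node combining $x_{n-1}$ and $x_n$), but it is of course an ROP, which is all the argument requires.
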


\begin{proof}
For $n$ odd, this follows immediately from
Proposition~\ref{prop:easy-upper-bound2}. 

If $n$ is even, say $n = 2k$, then define the following polynomials:
\begin{eqnarray*}
\textrm{for~} i\in [k-1], ~~f_{i} &=& (x_{2i-1} + x_{2i}) \cdot
\left(\displaystyle\prod_{\substack{ k \in [n] \\ k \neq 2i, 2i-1 }}
x_{k} \right) \\
f_k &=& \left( \beta x_{2k-1} + \beta x_{2k} + \alpha x_{2k-1} x_{2k} \right) \cdot
\left(\displaystyle\prod_{\substack{ m \in [n] \\ k \neq 2k, 2k-1 }}
x_{m} \right).
\end{eqnarray*}
Then we have $ f = \beta (f_{1} + f_{2} + \ldots + f_{k-1}) + f_k$.

Note that each $f_i$ is an ROP; for $i<k$ this is immediate, and for
$i=k$, the factor involving $x_{2k-1}$ and
$x_{2k}$ is bivariate multilinear and hence an ROP.
Thus we have a representation 
of $f$ as a sum of  $k = \lceil n/2 \rceil$ ROPs. 
\qed \end{proof}

Combining the results of Lemma~\ref{lem:lower-bound} and
Lemma~\ref{lem:upper-bound}, we obtain the following theorem. 
At $\alpha=0, \beta=1$, it yields Theorem~\ref{thm:hierarchy}.
\begin{theorem}
\label{thm:hierarchy-general}
For each $n \ge 1$, any
$\alpha \in \F$ and any any $\beta \in \F \setminus \{0\}$, the
polynomial $\alpha S_{n}^{n} + \beta S_{n}^{n-1}$ is in $\sum^k \cdot
\textrm{ROP}$ but not in $\sum^{k-1} \cdot \textrm{ROP}$, where $k=
\lceil n/2 \rceil$. 
\end{theorem}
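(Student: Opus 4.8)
\textbf{Proof proposal for Theorem~\ref{thm:hierarchy-general}.}
The plan is to observe that this theorem is nothing more than the packaging of the two lemmas already established, and so the work consists in carefully matching the quantifiers and the index $k$. First I would fix $n \ge 1$, $\alpha \in \F$, and $\beta \in \F \setminus \{0\}$, and set $k = \lceil n/2 \rceil$. The upper bound direction is immediate: Lemma~\ref{lem:upper-bound} states that $\alpha S_n^n + \beta S_n^{n-1}$ is a sum of at most $\lceil n/2 \rceil = k$ ROPs, hence lies in $\sum^k \cdot \mathrm{ROP}$. (There is a harmless slack here when the true number of summands needed is strictly less than $k$; membership in $\sum^k \cdot \mathrm{ROP}$ only asserts ``at most $k$'', so nothing more is required.)

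For the lower bound direction I would argue by contrapositive using Lemma~\ref{lem:lower-bound}. Suppose, for contradiction, that $\alpha S_n^n + \beta S_n^{n-1} \in \sum^{k-1} \cdot \mathrm{ROP}$, i.e. it is a sum of at most $k-1$ ROPs. Then in particular it is a sum of $k-1$ ROPs. Now I need to check that $k - 1 < n/2$, so that Lemma~\ref{lem:lower-bound} applies and yields a contradiction (since $\beta \neq 0$ is exactly the hypothesis of that lemma). This is the one place where a small case split on the parity of $n$ is warranted: if $n = 2t$ is even then $k = t$ and $k - 1 = t - 1 < t = n/2$; if $n = 2t+1$ is odd then $k = t+1$ and $k - 1 = t < t + \tfrac12 = n/2$. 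In both cases $k - 1 < n/2$, so Lemma~\ref{lem:lower-bound} forbids a sum of $k-1$ ROPs, contradicting the assumption. Hence $\alpha S_n^n + \beta S_n^{n-1} \notin \sum^{k-1} \cdot \mathrm{ROP}$.

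Finally, specializing to $\alpha = 0$, $\beta = 1$ gives $S_n^{n-1} \in \sum^{\lceil n/2\rceil}\cdot \mathrm{ROP} \setminus \sum^{\lceil n/2 \rceil - 1}\cdot\mathrm{ROP}$, which is precisely Theorem~\ref{thm:hierarchy}; I would note this as a one-line corollary at the end.

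I do not anticipate any genuine obstacle here, since all the mathematical content is already in Lemmas~\ref{lem:lower-bound} and~\ref{lem:upper-bound}. The only thing to be careful about is the off-by-one bookkeeping between the strict inequality ``$k < n/2$'' in the hypothesis of Lemma~\ref{lem:lower-bound} and the ceiling expression $k = \lceil n/2 \rceil$ in the theorem statement — the parity check above is what makes the two fit together cleanly. For extremely small $n$ (e.g. $n = 1$, where $\sum^0 \cdot \mathrm{ROP} = \{0\}$ and $S_1^0$ should be interpreted appropriately), I would quickly confirm the boundary behaviour is consistent with the convention that a sum of $0$ ROPs is the zero polynomial.
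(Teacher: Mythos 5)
Your proposal is correct and matches the paper's approach exactly: the theorem is proved by combining Lemma~\ref{lem:lower-bound} (lower bound, requiring $\beta \neq 0$) with Lemma~\ref{lem:upper-bound} (upper bound), and specializing to $\alpha=0$, $\beta=1$ recovers Theorem~\ref{thm:hierarchy}. Your explicit parity check confirming $k-1 < n/2$ for $k = \lceil n/2\rceil$ is a detail the paper leaves implicit, but it is exactly the right verification.
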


\section{A 4-variate multilinear polynomial not in $\sum^2\cdot \textrm{ROP}$}
\label{sec:hard-4-variate}
This section is devoted to proving Theorem~\ref{thm:hard-for-2-sum}.
We want to find an explicit 4-variate multilinear polynomial that is
not expressible as the sum of 2 ROPs.

Note that the proof of Theorem~\ref{thm:hierarchy} does not help here,
since the polynomials separating $\sum^2\cdot \textrm{ROP}$ from
$\sum^3\cdot \textrm{ROP}$ have 5 or 6 variables. One obvious approach
is to consider other combinations of the symmetric polynomials. This
fails too; we can show that all such combinations are in $\sum^2 \cdot
\ROP$. 
\begin{proposition}
\label{prop:4-variate-sympoly-combo}
For every choice of field constants $a_i$ for each
$i \in \{0,1,2,3,4\}$, the polynomial $\sum_{i=0}^4 a_i S_4^i$ can be
expressed as the sum of two ROPs.
\end{proposition}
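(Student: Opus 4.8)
The plan is to keep $x_1,x_2$ together and $x_3,x_4$ together, rewrite the polynomial in a form that is bilinear in $\{x_3,x_4\}$, and then read off two ROPs. First I would set $s=x_1+x_2$, $\pi=x_1x_2$ and record the identities $S_4^1=s+(x_3+x_4)$, $S_4^2=\pi+s(x_3+x_4)+x_3x_4$, $S_4^3=\pi(x_3+x_4)+s\,x_3x_4$, $S_4^4=\pi\,x_3x_4$, which give
\[ p \;:=\; \sum_{i=0}^{4} a_i S_4^i \;=\; A + B\,(x_3+x_4) + C\,x_3x_4, \]
with $A=a_0+a_1s+a_2\pi$, $B=a_1+a_2s+a_3\pi$, $C=a_2+a_3s+a_4\pi$. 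Each of $A,B,C$ is a bivariate multilinear polynomial in $\{x_1,x_2\}$, hence a ROP; this fact, together with the observation that a product of ROPs on disjoint variable sets (optionally shifted by a constant) is again a ROP, is what the whole argument leans on.

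For the main case, $a_3^2\neq a_2a_4$, I would use that the matrix $\left(\begin{smallmatrix}a_2&a_3\\a_3&a_4\end{smallmatrix}\right)$ is invertible: pick $\lambda,\kappa\in\F$ with $\lambda a_2+\kappa a_3=a_1$ and $\lambda a_3+\kappa a_4=a_2$, and set $\epsilon=a_0-\lambda a_1-\kappa a_2$. The two summands would then be
\[ f_1 = C\cdot(x_3x_4+\kappa)+\epsilon, \qquad f_2 = B\cdot(x_3+x_4+\lambda), \]
each a ROP by the remark above. To verify $f_1+f_2=p$ I would expand $f_1+f_2=C\,x_3x_4+B\,(x_3+x_4)+(\lambda B+\kappa C+\epsilon)$ and check that $\lambda B+\kappa C+\epsilon=A$: the coefficients of $s$ and of $\pi$ cancel precisely because of the two defining linear equations, and the constant term matches by the choice of $\epsilon$.

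For the remaining case $a_3^2=a_2a_4$ I would split once more. If $a_4\neq 0$, take $t=a_3/a_4$ and $f_1=a_4\prod_{i=1}^4(x_i+t)$; since $\prod_{i=1}^4(x_i+t)=\sum_{k=0}^4 t^{4-k}S_4^k$ and $a_4t^2=a_3^2/a_4=a_2$, the coefficients of $S_4^4,S_4^3,S_4^2$ in $f_1$ agree with those of $p$, so $f_2=p-f_1$ is an affine linear polynomial and hence a ROP. If instead $a_4=0$, then also $a_3=0$ and $p=a_0+a_1S_4^1+a_2S_4^2$; here I would take $f_2=a_2(x_1+x_2)(x_3+x_4)$ and $f_1=p-f_2=\big(a_2x_1x_2+a_1(x_1+x_2)\big)+\big(a_2x_3x_4+a_1(x_3+x_4)+a_0\big)$, which is a ROP because it is the sum at a $+$-gate of two bivariate polynomials on the disjoint sets $\{x_1,x_2\}$ and $\{x_3,x_4\}$ (equivalently, it rests on $S_4^2=(x_1x_2+x_3x_4)+(x_1+x_2)(x_3+x_4)$).

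I expect the only real obstacle to be the degenerate locus $a_3^2=a_2a_4$: there the $2\times2$ system behind the main construction is not guaranteed solvable, so the argument must fall back on the explicit symmetric product $\prod_i(x_i+t)$, and in the further-degenerate subcase $a_3=a_4=0$ on the two-ROP decomposition of $S_4^2$. Everything else should be routine once the identity $p=A+B(x_3+x_4)+Cx_3x_4$ is in hand, using repeatedly that bivariate multilinear polynomials and variable-disjoint products of ROPs are ROPs.
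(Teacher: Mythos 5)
Your proof is correct, and it takes a genuinely different route from the paper. The paper's proof is a direct case analysis on the four regimes $a_2{=}a_3{=}0$; $a_2{=}0,a_3{\neq}0$; $a_2{\neq}0,a_2a_4{=}a_3^2$; $a_2{\neq}0,a_2a_4{\neq}a_3^2$, and in each case it simply exhibits an explicit sum-of-two-ROPs expression whose correctness must be verified by expansion. Your approach instead passes to the coordinates $s=x_1+x_2$, $\pi=x_1x_2$, writes $p=A+B(x_3+x_4)+Cx_3x_4$, and observes that the ``Hankel'' matrix $\left(\begin{smallmatrix}a_2&a_3\\a_3&a_4\end{smallmatrix}\right)$ sends $(1,s,\pi)\mapsto(A,B,C)$ in a structured way; whenever $a_2a_4-a_3^2\neq0$, a single $2\times2$ linear system produces shifts $\lambda,\kappa$ making $\lambda B+\kappa C+\epsilon=A$ collapse identically, yielding $f_1=C(x_3x_4+\kappa)+\epsilon$ and $f_2=B(x_3+x_4+\lambda)$ directly. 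This collapses the paper's two nondegenerate cases (its second and fourth) into one uniform argument and makes the verification transparent rather than a coefficient-matching exercise. On the degenerate locus $a_3^2=a_2a_4$ you fall back on the observation that $a_4\prod_i(x_i+a_3/a_4)$ already absorbs all terms of degree at least two (which is conceptually cleaner than the paper's third case), and the final subcase $a_3=a_4=0$ is handled via $S_4^2=(x_1x_2+x_3x_4)+(x_1+x_2)(x_3+x_4)$, which is essentially the same idea as the paper's first case. Each approach buys something: the paper's is self-contained and immediately checkable by a reader willing to expand; yours is shorter to verify once the $A,B,C$ reformulation is in place, and makes it clear \emph{why} the nondegenerate case works (invertibility of a $2\times2$ matrix) rather than just that it does.
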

\begin{proof}
Let $g = \sum_i a_{i} S_4^i$. 
We obtain the expression for $g$ in different ways in 4 different
cases. 
\[
\begin{array}{|l@{\hspace{5mm}}|@{\hspace{5mm}}rl|}
\hline 
\textrm{Case} & \multicolumn{2}{c|}{\textrm{Expression} }
\\ \hline 
a_2=a_3 = 0 & 
g = & a_0 + a_1 S_4^1 + a_4 S_4^4 
\\ \hline 
a_2=0; & 
g = & \left(a_1 + a_3 x_1 x_2) (x_3 + x_4 + \frac{a_4}{a_3}x_3x_4)\right)
\\
a_3 \neq 0 && + \left((a_1 + a_3 x_3x_4) (x_1 + x_2 - \frac{a_1a_4}{a_3^2})\right) +
c \\ \hline 
a_2\neq0; & 
a_2g = & (a_1 + a_2(x_1 + x_2) + a_3 x_1 x_2) (a_1 + a_2(x_3 + x_4) +
a_3 x_3 x_4)  \\
a_2a_4 = a_3^2 & & + \left(a_2^2 - a_1a_3) (x_1x_2 + x_3x_4)\right) + c \\ \hline 
a_2\neq 0; & 
a_2 g = & (a_1 + a_2(x_1 + x_2) + a_3 x_1 x_2) (a_1 + a_2(x_3 +
x_4) + a_3 x_3 x_4)  \\ 
a_2a_4 \neq a_3^2  & & + \left(x_1x_2 + \frac{a_2^2 -
a_1a_3}{a_2a_4 - a_3^2}\right) \left((a_2a_4 - a_3^2) x_3x_4 + a_2^2 - a_1a_3\right) +
c  \\ \hline 
\end{array}
\]
In the above, $c$ is an appropriate field constant, and can be added
to any ROP. Notice that the first expression is a sum of two ROPs
since it is the sum of a linear polynomial and a single monomial. All
the other expressions have two summands, each of which is a product of
variable-disjoint bivariate polynomials (ignoring constant
terms). Since every bivariate polynomial is a ROP, these
representations are also sums of $2$ ROPs.
\qed\end{proof}

Instead, we define a polynomial that gives carefully chosen weights to
the monomials of $S_4^2$. Let $f^{\alpha,\beta,\gamma}$ denote the
following polynomial:
\[f^{\alpha,\beta,\gamma} = \alpha \cdot (x_{1}x_{2} + x_{3}x_{4}) + \beta \cdot
(x_{1}x_{3} + x_{2}x_{4}) + \gamma \cdot (x_{1}x_{4} +
x_{2}x_{3}).\] To keep notation simple, we will omit the superscript
when it is clear from the context. In the theorem below, we
obtain necessary and sufficient conditions on $\alpha,\beta,\gamma$
under which $f$ can be expressed as a sum of two ROPs. 

\begin{theorem}[Hardness of representation for sum of
  $2$ ROPs] \label{thm:hard-4-variate} Let $f$ be the polynomial
  $f^{\alpha,\beta,\gamma} = \alpha \cdot (x_{1}x_{2} + x_{3}x_{4})
  + \beta
\cdot (x_{1}x_{3} + x_{2}x_{4}) + \gamma \cdot (x_{1}x_{4} +
x_{2}x_{3})$. The following are equivalent:
\begin{enumerate}
\item $f$ is not expressible as the sum of two ROPs.
\item $\alpha, \beta, \gamma$ satisfy all the three conditions C1, C2,
  C3 listed below.
  \begin{description}
    \item[C1:] $\alpha\beta\gamma \neq 0$. 
    \item[C2:] $(\alpha^2-\beta^2)(\beta^2 -
      \gamma^2)(\gamma^2-\alpha^2) \neq 0$. 
    \item[C3:] None of the equations $X^2 - d_i = 0$, $i\in [3]$,
  has a root in $\F$, where
      \begin{eqnarray*}
        d_1 &=& (+\alpha^2 - \beta^2 - \gamma^2)^2 - (2\beta\gamma)^2 \\
        d_2 &=& (-\alpha^2 + \beta^2 - \gamma^2)^2 - (2\alpha\gamma)^2 \\
        d_3 &=& (-\alpha^2 - \beta^2 + \gamma^2)^2 - (2\alpha\beta)^2 \\
      \end{eqnarray*}
  \end{description}
\end{enumerate}
 \end{theorem}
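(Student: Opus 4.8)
The plan is to prove $(1)\Leftrightarrow(2)$ via both contrapositives, with the whole argument resting on a structural dichotomy for sums of two ROPs (to be established as Lemma~\ref{lem:structure-2-sum-4-variate}): if a $4$-variate multilinear $f$ lies in $\sum^2\cdot\ROP$, then at least one of the following holds --- \textbf{(i)} some restriction of $f$ to exactly two surviving variables is affine-linear; \textbf{(ii)} there are $i\neq j$ with $\{1,x_i,x_j,\partial_{x_i}f,\partial_{x_j}f\}$ linearly dependent; \textbf{(iii)} $f=l_1l_2+l_3l_4$ where $\{l_1,l_2\}$ and $\{l_3,l_4\}$ are pairs of variable-disjoint linear forms. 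I would prove this dichotomy using the normal form of Definition~\ref{def:ROF} together with the multiplicative-ROP toolkit already developed: Fact~\ref{fact:mrop-partial}, Lemma~\ref{lem:mrops}, and Proposition~\ref{prop:3-var-ROP}. Roughly, if some summand contains an addition gate one extracts a variable-disjoint split landing in case (i) or (ii); if both summands are multiplicative ROPs one uses Fact~\ref{fact:mrop-partial} and Lemma~\ref{lem:mrops} (in the spirit of the proof of Lemma~\ref{lem:lower-bound}) to force case (ii) or (iii). This dichotomy is the substantial preliminary and I would isolate it rather than inline it.

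Granting the dichotomy, the direction $(2)\Rightarrow(1)$ is the easy one: I will show that if any of C1, C2, C3 fails then $f\in\sum^2\cdot\ROP$. If C1 fails, say $\gamma=0$, then $f=x_1(\alpha x_2+\beta x_3)+x_4(\beta x_2+\alpha x_3)$ is a sum of two ROPs. If C2 fails, say $\beta=\epsilon\alpha$ with $\epsilon\in\{1,-1\}$, then $f=\alpha(x_1+\epsilon x_4)(x_2+\epsilon x_3)+\gamma\bigl((x_1\times x_4)+(x_2\times x_3)\bigr)$, again a sum of two ROPs (each summand is read-once). If C3 fails, say $d_1=\delta^2$ with $\delta\in\F$, we may assume C1 holds (else we are in the first case), so $\beta\gamma\neq0$ and the quadratic $\beta\gamma\,t^2-(\alpha^2-\beta^2-\gamma^2)\,t+\beta\gamma$ has a root $t\in\F$; one then checks that with $l_1=\beta x_1-(\beta/t)x_2$, $l_2=x_3-t x_4$, $l_3=\alpha x_1+\frac{\alpha^2}{\gamma+\beta t}x_3$, $l_4=x_2+\frac{\gamma+\beta t}{\alpha}x_4$ one has $f=l_1l_2+l_3l_4$ (here $\gamma+\beta t\neq0$ because $\alpha\neq0$), a sum of two products of variable-disjoint linear forms, hence two ROPs. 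The cases $d_2=\delta^2$ and $d_3=\delta^2$ reduce to $d_1=\delta^2$ by relabeling variables: transpositions of $x_2,x_3,x_4$ realize the full $S_3$ acting on $(\alpha,\beta,\gamma)$, permuting $(d_1,d_2,d_3)$ in the same way.

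For $(1)\Rightarrow(2)$, equivalently ``$f\in\sum^2\cdot\ROP$ implies C1 or C2 or C3 fails'', I invoke the dichotomy and treat its three cases. Case (i): a direct inspection of the $\binom42$ restrictions of $f$ to two variables shows the result is affine-linear only when the weight attached to the surviving pair (one of $\alpha,\beta,\gamma$) is $0$ --- so C1 fails. Case (ii): from $\partial_{x_1}f=\alpha x_2+\beta x_3+\gamma x_4$, $\partial_{x_2}f=\alpha x_1+\gamma x_3+\beta x_4$, and the analogous expressions for $\partial_{x_3}f,\partial_{x_4}f$, the $5\times5$ determinant testing dependence of $\{1,x_i,x_j,\partial_{x_i}f,\partial_{x_j}f\}$ works out (up to sign) to a difference of two of $\alpha^2,\beta^2,\gamma^2$ (e.g.\ $\beta^2-\gamma^2$ for $\{i,j\}=\{1,2\}$), so $(\alpha^2-\beta^2)(\beta^2-\gamma^2)(\gamma^2-\alpha^2)=0$ --- C2 fails. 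Case (iii): if the two partitions of $\{x_1,x_2,x_3,x_4\}$ underlying $\{l_1,l_2\}$ and $\{l_3,l_4\}$ are equal, say both $\{x_1x_2\mid x_3x_4\}$, then $f$ has no $x_1x_2$ monomial, forcing $\alpha=0$ --- C1 fails; otherwise, after relabeling into the shape of Lemma~\ref{lem:commutator}, $f=l_1(x_1,x_2)l_2(x_3,x_4)+l_3(x_1,x_3)l_4(x_2,x_4)$ and that lemma gives $l_2\mid\triangle_{12}(f)$. A short computation gives $\triangle_{12}(f)=(\alpha^2-\beta^2-\gamma^2)\,x_3x_4-\beta\gamma\,(x_3^2+x_4^2)$, a binary quadratic form in $x_3,x_4$ whose discriminant is exactly $d_1$; when C1 holds this form is nonzero and has no lower-degree terms, so any linear divisor is a nonzero homogeneous linear form, which exists over $\F$ only if $d_1$ is a square --- C3 fails. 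The remaining choices of partition pairs give $d_2$ or $d_3$ by the same $S_3$-symmetry.

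The main obstacle, once the dichotomy is assumed, is Case (iii): one must correctly dispose of the degenerate sub-cases (coinciding partitions; and $\triangle_{12}(f)\equiv0$, which happens precisely when C1 or C2 already fails), keep track of which of $d_1,d_2,d_3$ each admissible relabeling produces, and verify the explicit decomposition used in the converse. The deeper obstacle, of course, is proving the structural dichotomy itself --- that is where the ROF normal form and the multiplicative-ROP lemmas really earn their keep.
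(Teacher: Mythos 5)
Your proposal matches the paper's strategy exactly: you identify the same structural dichotomy for sums of two $4$-variate ROPs (your (i), (ii), (iii) are precisely the paper's conditions $C1'$, $C2'$, $C3'$ in Lemma~\ref{lem:structure-2-sum-4-variate}), and you rule out each case via restrictions, the $5\times 5$ determinant of derivatives, and the commutator together with Lemma~\ref{lem:commutator}, just as the paper does; the direction $(2)\Rightarrow(1)$ differs only in the parameterization of the explicit decomposition when C3 fails (you solve $\beta\gamma\,t^2-(\alpha^2-\beta^2-\gamma^2)t+\beta\gamma=0$, the paper an equivalent quadratic), and in the minor detail that you use the vanishing of a monomial coefficient rather than condition $C1'$ to dispose of the equal-partition subcase. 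One small caveat: your rough sketch of how the dichotomy lemma is proved (casing on whether the summands are multiplicative ROPs, invoking Fact~\ref{fact:mrop-partial} and Lemma~\ref{lem:mrops}) does not match the paper's actual proof, which cases on the type of the top gates $v_1, v_2$ and appeals mainly to Proposition~\ref{prop:3-var-ROP}; but since you explicitly isolate that lemma as a separate preliminary and correctly state it, this does not affect the soundness of the overall argument.
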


\begin{remark}
  \begin{enumerate} \item It follows, for instance, that $2(x_{1}x_{2}
    + x_{3}x_{4}) + 4(x_{1}x_{3} + x_{2}x_{4}) + 5(x_{1}x_{4} +
    x_{2}x_{3})$ cannot be written as a sum of $2$ ROPs over reals,
    yielding Theorem~\ref{thm:hard-for-2-sum}.

   \item If $\F$ is an algebraically closed field, then for every
      $\alpha,\beta, \gamma$, condition C3 fails, and so every
      $f^{\alpha,\beta,\gamma}$ can be written as a sum of 2 ROPs.
      However we do not know if there are other examples, or whether
      all multilinear 4-variate polynomials are expressible as the sum
      of two ROPs.

  \item Even if $\F$ is not algebraically closed, condition C3 fails
      if for each $a \in \F$, the equation $X^2=a$ has a root.
  \end{enumerate}
\end{remark}

Our strategy for proving Theorem~\ref{thm:hard-4-variate} is a generalization
of an idea used in \cite{Vol16}. While Volkovich showed that 3-variate ROPs
have a nice structural property in terms of their partial derivatives and commutators,
we show that the sums of two 4-variate ROPs have at least one nice structural
property in terms of their bivariate restrictions, partial
derivatives, and commutators. Then we show that  provided 
$\alpha,\beta,\gamma$ are chosen carefully, the polynomial
$f^{\alpha,\beta,\gamma}$ will not satisfy any of these properties and
hence cannot be a sum of two ROPs.

To prove Theorem~\ref{thm:hard-4-variate}, we first consider the
easier direction, $1 \Rightarrow 2$, and prove the contrapositive.
\begin{lemma}
\label{lem:4-var-upper-bound}
If $\alpha, \beta, \gamma$ do not satisfy all of C1,C2,C3, then the
polynomial $f$ can be written as a sum of 2 ROPs.
\end{lemma}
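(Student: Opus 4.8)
The plan is to prove the contrapositive: if at least one of C1, C2, C3 fails, then $f=f^{\alpha,\beta,\gamma}$ is a sum of two ROPs. I split into three cases according to which condition is the first to fail, and in each I use the symmetry of the family: the variable transpositions $(x_1\,x_2)$, $(x_1\,x_3)$, $(x_1\,x_4)$ permute the three ``matching polynomials'' $x_1x_2+x_3x_4$, $x_1x_3+x_2x_4$, $x_1x_4+x_2x_3$ among themselves, hence realise all of $S_3$ acting on the coefficient triple $(\alpha,\beta,\gamma)$. So in each case I may normalise which coefficient, or which pair of squared coefficients, is distinguished.

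The two ``degenerate'' cases are immediate. If C1 fails, then after a variable permutation $\gamma=0$, and grouping the monomials of $f$ containing $x_1$ with those containing $x_4$ gives $f = x_1(\alpha x_2+\beta x_3) + x_4(\beta x_2+\alpha x_3)$, a sum of two ROPs (each summand is a leaf times a variable-disjoint bivariate linear form). If C1 holds but C2 fails, then after a variable permutation $\alpha^2=\beta^2$, i.e.\ $\beta=\pm\alpha$; the two matching polynomials weighted by $\alpha$ and $\beta$ then collapse into a single bilinear product, $f = \alpha(x_1\pm x_4)(x_2\pm x_3) + \gamma(x_1x_4+x_2x_3)$ with both signs equal, and both $(x_1\pm x_4)(x_2\pm x_3)$ and $x_1x_4+x_2x_3$ are ROPs.

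The substantive case is when C1, C2 hold and C3 fails. Here the natural candidate — the only shape left once C1, C2 rule out the first two alternatives of the structural lemma (Lemma~\ref{lem:structure-2-sum-4-variate}), and the one suggested by the commutator — is $f = l_1(x_1,x_2)\,l_2(x_3,x_4) + l_3(x_1,x_3)\,l_4(x_2,x_4)$ with homogeneous bivariate linear forms $l_i$, and this is what I will construct. Fixing the two scaling freedoms (normalise the $x_1$-coefficients of $l_1$ and $l_3$ to $1$) and matching the six degree-two coefficients of $f$ reduces the system to a single univariate quadratic, $\beta\gamma\,u^2 + (\alpha^2-\beta^2-\gamma^2)\,u + \beta\gamma = 0$, in the one remaining unknown $u$ (the $x_2$-coefficient of $l_1$), the other coefficients of the $l_i$ being explicit rational functions of $u$. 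Its discriminant is $d_1$ — no coincidence, since $d_1$ is also the discriminant of $\triangle_{12}(f) = -\beta\gamma\,x_3^2 + (\alpha^2-\beta^2-\gamma^2)\,x_3x_4 - \beta\gamma\,x_4^2$ viewed as a binary quadratic form in $x_3,x_4$, whose linear factors are exactly the possible values of $l_2$ by Lemma~\ref{lem:commutator}. Moreover $d_1=d_2=d_3 = \alpha^4+\beta^4+\gamma^4-2\alpha^2\beta^2-2\beta^2\gamma^2-2\gamma^2\alpha^2$, so the failure of C3 makes this common value a square in $\F$, and (in characteristic $\neq 2$) the quadratic has a root $u\in\F$. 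Finally C1 supplies the non-degeneracy: the product of the two roots is $1$, so $u\neq 0$, and $\gamma-\beta u\neq 0$ (otherwise $\gamma\alpha^2/\beta=0$); hence all four $l_i$ are genuine bivariate forms, $l_1l_2$ and $l_3l_4$ are ROPs, and $f\in\sum^2\cdot\ROP$.

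I expect the whole difficulty to sit in the substantive case: choosing the right pairing for the ansatz, carrying out the elimination cleanly enough that the quadratic's discriminant is visibly $d_1$, and discharging the non-degeneracy conditions so that the formal solution is an honest sum of two ROPs. The identity $d_1=d_2=d_3$ is the one real shortcut — it lets a single quadratic settle all of C3 without a further appeal to symmetry. (In characteristic $2$ the implication ``discriminant a square $\Rightarrow$ solvable'' must be replaced by the usual trace criterion; I would treat characteristic $\neq 2$ as the main case.)
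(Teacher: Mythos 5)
Your proposal is correct and takes essentially the same route as the paper: the same case split on which of C1, C2, C3 fails, the same explicit two-ROP expressions in the degenerate cases, and the same bilinear ansatz $l_1(x_1,x_2)l_2(x_3,x_4)+l_3(x_1,x_3)l_4(x_2,x_4)$ reducing to a quadratic whose discriminant is $d_1$ (up to a nonzero square factor), together with the same non-degeneracy checks. Your observation that $d_1=d_2=d_3=\alpha^4+\beta^4+\gamma^4-2\alpha^2\beta^2-2\beta^2\gamma^2-2\gamma^2\alpha^2$ is a pleasant addition not made explicit in the paper (which instead invokes symmetry to take ``WLOG $X^2-d_1$ has a root''), and your characteristic-2 caveat is a fair flag that the paper leaves unaddressed.
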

\begin{proof}
\noindent \textbf{C1 false:} If any of $\alpha, \beta, \gamma$ is
zero, then by definition $f$ is the the sum of at most two ROPs.

\noindent \textbf{C2 false:} Without loss of generality,
assume $\alpha^2 = \beta^2$, so $\alpha= \pm \beta$. Then $f$ is computed
by $f = \alpha \cdot (x_{1} \pm x_{4})(x_2 \pm x_3) + \gamma \cdot
(x_{1}x_{4} + x_{2}x_{3})$.

\noindent \textbf{C1 true; C3 false:} Without loss of generality,
the equation $X^2 - d_1 = 0$ has a root $\tau$. 
We try to express $f$ as
\[
\alpha (x_1 - a x_3) ( x_2 - b x_4) + 
\beta (x_1 - c x_2) (x_3 - d x_4).
\] 
The coefficients for $x_3x_4$ and $x_2x_4$ force
$ab=1$, $cd=1$, giving the form
\[
\alpha (x_1 - a x_3) ( x_2 - \frac{1}{a} x_4) + 
\beta (x_1 - c x_2) (x_3 - \frac{1}{c} x_4).
\] 
Comparing the coefficients for $x_1x_4$ and $x_2x_3$, we 
obtain the constraints 
\[
-\frac{\alpha}{a} - \frac{\beta}{c} = \gamma; ~~~~~~ - \alpha a
 - \beta c = \gamma
 \]
Expressing $a$ as $\frac{-\gamma - \beta c}{\alpha}$, we get a
quadratic constraint on $c$; it must be a  root of the equation
\[Z^2 + \frac{-\alpha^2+\beta^2+\gamma^2}{\beta\gamma}Z + 1 = 0. \]
Using the fact that $\tau^2 = d_1 = (-\alpha^2+\beta^2+\gamma^2)^2 -
(2\beta\gamma)^2$, we see that indeed this equation does have
roots. The left-hand size splits into linear factors, giving
\[(Z-\delta)(Z-\frac{1}{\delta}) = 0 \textrm{~~where~~}
\delta = \frac{\alpha^2-\beta^2-\gamma^2 + \tau}{2\beta\gamma}.
\] 
It is easy to verify that $\delta \neq 0$ and $\delta \neq
-\frac{\gamma}{\beta}$ (since $\alpha \neq 0$). Further, define $\mu
= \frac{-(\gamma+\beta\delta)}{\alpha}$. Then $\mu$ is well-defined
(because $\alpha\neq 0$) and is also non-zero. Now setting $c=\delta$
and $a=\mu$, we have satisfied all the constraints and so we can write
$f$ as the sum of 2 ROPs as follows:
\[
f = \alpha (x_1 - \mu x_3) ( x_2 - \frac{1}{\mu} x_4) + 
\beta (x_1 - \delta x_2) (x_3 - \frac{1}{\delta} x_4).
\] 
\qed\end{proof}

Now we consider the harder direction: $2 \Rightarrow 1$. Again, we
consider the contrapositive.  We first show
(Lemma~\ref{lem:structure-2-sum-4-variate}) a structural property
satisfied by every polynomial in $\sum^2\cdot \ROP$: it must satisfy
at least one of the three properties $C1', C2', C3'$ described in the
lemma. We then show (Lemma~\ref{lem:no-structure}) that under the
conditions $C1,C2,C3$ from the theorem statement, $f$ does not satisfy any of
$C1', C2', C3'$; it follows that $f$ is not expressible as the sum of
2 ROPs.

\begin{lemma}
  \label{lem:structure-2-sum-4-variate}
Let $g$ be a $4$-variate multilinear polynomial over the field $\F$
which can be expressed as a sum of $2$ ROPs. Then at least one of the
following conditions is true: 
\begin{description}
\item[C1':]  There exist $i, j \in [4]$ and $a, b \in \F$ such that
  $g\mid_{x_{i}= a, x_{j}= b}$ is linear.
\item[C2':] There exist $i, j \in [4]$ such that $x_{i}, x_{j},
  \partial_{x_{i}}(g), \partial_{x_{j}}(g), 1$ are linearly dependent. 
\item[C3':] $g = l_{1} \cdot l_{2} + l_{3} \cdot l_{4}$ where $l_{i}$s are
  linear forms,  $l_1$ and $l_2$ are   variable-disjoint, and
  $l_3$ and  $l_4$ are   variable-disjoint.  
\end{description}
\end{lemma}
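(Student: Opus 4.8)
The plan is to assume $g = \phi_1 + \phi_2$ where each $\phi_i$ is a ROP, and do a case analysis on the structure of $\phi_1$ and $\phi_2$ according to their top gates and how many variables each reads. The key observation is that a $4$-variate multilinear polynomial split as a sum of two ROPs leaves little room: between them the two ROFs read at most the four variables $x_1, x_2, x_3, x_4$, with possible sharing, so at least one of the $\phi_i$ reads at most two variables unless both read exactly three or four variables in a fairly constrained way. The goal in each case is to land in one of C1$'$, C2$'$, C3$'$.

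First I would dispose of the easy cases. If either $\phi_i$ depends on at most two variables, then writing $g = \phi_i + \phi_j$ and using that $\phi_j$ is also a ROP: actually more directly, if some $\phi_i$ has $|\Var(\phi_i)| \le 1$, then $g$ minus a linear term is a ROP, and one can find a bivariate restriction killing enough; and if $|\Var(\phi_i)| = 2$ for both, then $g$ is a sum of two bivariate ROPs — a direct computation on the monomials shows C1$'$ or C3$'$ holds. Next, if some $\phi_i$ has a $+$ as its top gate, then $\phi_i = \psi_1 + \psi_2 + c$ with $\Var(\psi_1) \cap \Var(\psi_2) = \emptyset$; since $|\Var(\phi_i)| \le 4$, one of $\psi_1, \psi_2$ is at most bivariate, and one of them — say $\psi_1$ with $\Var(\psi_1) = \{x_i, x_j\}$ (or fewer) — can be absorbed. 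Then $g \mid_{x_i = a, x_j = b}$ differs from $\phi_j \mid_{x_i=a,x_j=b}$ (a ROP on at most the two remaining variables, hence at most $2$-variate and therefore a bivariate ROP) by a constant, and by Proposition~\ref{prop:3-var-ROP}-type reasoning (or directly, a bivariate ROP with a $+$ or $\times$ top gate has a further restriction making it linear) we can force linearity, giving C1$'$.

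The main remaining case — and the hard part — is when both $\phi_1$ and $\phi_2$ have $\times$ as their top gate. Then $\phi_i = \alpha_i \cdot P_i \cdot Q_i + \beta_i$ with $\Var(P_i)$ and $\Var(Q_i)$ disjoint and nonempty. Since $\Var(P_i) \cup \Var(Q_i) \subseteq \{x_1,x_2,x_3,x_4\}$, each of $P_i, Q_i$ is bivariate or univariate. If both $P_i$ and $Q_i$ are univariate for $i = 1, 2$, then $g$ (up to an additive constant) is literally of the form $l_1 l_2 + l_3 l_4$ with each pair variable-disjoint: if the additive constant vanishes we get C3$'$ directly, and if not, we absorb the constant into one of the linear forms (possible since the top gate has a $\beta$ label, so $\phi_i = \alpha_i l l' + \beta_i$ and we can push $\beta_i$ into, say, $l$ only if $l$ or $l'$ is a genuine affine form — here we need to be slightly careful, but a nonzero constant added to $l_1 l_2$ where $l_1 = a x_i + b$ can be rewritten) — actually the cleanest route is to note that the additive constants of the two ROPs sum to the constant term of $g$, which is $0$ for $g = f^{\alpha,\beta,\gamma}$, but for a general $g$ we just fold constants in, landing in C3$'$. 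The subtler sub-case is when some $P_i$ or $Q_i$ is genuinely bivariate: say $\phi_1 = \alpha_1 R(x_1,x_2) \cdot l(x_3) + \beta_1$ with $l$ univariate. Then $\phi_2 = g - \phi_1$ is a ROP; taking $\partial_{x_3}$ kills the $l$-dependence appropriately and $\partial_{x_4}$ on $\phi_1$ is zero or constant, so we can try to show $x_3, x_4, \partial_{x_3}(g), \partial_{x_4}(g), 1$ are linearly dependent, yielding C2$'$. The bookkeeping here — tracking which variables each factor reads and checking the linear-dependence claim — is where the real work lies, and I expect this mixed bivariate/univariate factorization case to be the principal obstacle; it may require invoking the commutator as flagged in the proof sketch of Theorem~\ref{thm:hard-for-2-sum}, namely that $\triangle_{ij}(g)$ has a linear factor which pins down one of the $l_i$'s.

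Throughout, I would organize the argument as: (i) reduce to both top gates being $\times$; (ii) within that, split on whether both products are of univariate factors (giving C3$'$) or at least one factor is bivariate (giving C1$'$ or C2$'$); and (iii) in the genuinely bivariate sub-cases, use partial derivatives with respect to the "missing" variables to either collapse degree (C1$'$) or exhibit the linear dependence (C2$'$). The cleanest statement to aim for in the hard sub-case is that if $\phi_1$'s two $\times$-children are $R(x_i,x_j)$ and $S(x_k)$ with $\{i,j,k\}$ three distinct indices and $x_\ell$ the fourth, then $\partial_{x_\ell}(g) = \partial_{x_\ell}(\phi_2)$ is a ROP not reading $x_\ell$, and combining this with the structure of $\phi_2$ forces one of the three conditions.
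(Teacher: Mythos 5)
There is a genuine gap, and it is the central one. Your plan reduces to the case of both top gates being $\times$ by arguing that whenever some $\phi_i$ has a $+$ top gate you can land in C1$'$. That is false, and the paper's own Case~3 (one $+$ top and one $\times$ top) is exactly where it fails. Concretely, take $g = x_1x_2 + x_3x_4 + (x_1+x_2)(x_3+x_4) = S_4^2$; this is a sum of a $+$-topped ROP and a $\times$-topped ROP, yet every monomial of degree $2$ in four variables appears with a nonzero coefficient, so \emph{every} restriction of two variables to field constants leaves a surviving degree-$2$ monomial on the other two, and C1$'$ cannot hold. (C3$'$ also fails here, and the paper derives C2$'$ via the partial derivatives $\partial_{x_1}(g), \partial_{x_2}(g)$.) Your argument for the $+$ case also has a more local flaw: writing $\phi_i = \psi_1 + \psi_2 + c$ with $\Var(\psi_1)=\{x_i,x_j\}$ and setting $x_i, x_j$ kills $\psi_1$ but not $\psi_2$, which lives on the other two variables and may still be a degree-$2$ bivariate ROP, so $g\mid_{x_i=a,x_j=b}$ does not ``differ from $\phi_j\mid$ by a constant.'' In the genuine both-$+$ case the paper instead chooses the two restricted variables so that \emph{all four} variable-disjoint summands become at most univariate simultaneously; that more careful choice is what makes the argument go through.

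You have also mis-located where the difficulty lies. In the both-$\times$ case with a genuinely bivariate factor, the paper does \emph{not} reach for C2$'$ or the commutator: it restricts directly (zero out a univariate linear factor of the bivariate $M_i$ and then one more variable) and gets C1$'$. The commutator (Lemma~\ref{lem:commutator}) is used only in the \emph{next} lemma, Lemma~\ref{lem:no-structure}, to rule out C3$'$ for the specific polynomial $f^{\alpha,\beta,\gamma}$; it plays no role in proving the structural Lemma~\ref{lem:structure-2-sum-4-variate}. One further small issue you flagged yourself and then hand-waved: in the both-$\times$, all-univariate-factors case the ROF normal form yields $g = l_1 l_2 + l_3 l_4 + (\beta_1+\beta_2)$, and you cannot in general fold a nonzero additive constant into a product of variable-disjoint linear forms; this needs to be addressed, not deferred to ``the constant is $0$ for $f^{\alpha,\beta,\gamma}$,'' since the lemma is stated for all $4$-variate multilinear $g$. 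A cleaner organization, as in the paper, is to first use Proposition~\ref{prop:3-var-ROP} twice to dispose of the case $|\Var(\phi_i)| \le 3$ for some $i$ (this subsumes several of your sub-cases at once), and then split into the three top-gate configurations $\times\times$, $++$, $+\times$, with C2$'$ appearing only in the final sub-case of $+\times$.
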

\begin{proof}
Let $\phi$ be a sum of $2$ ROFs computing $g$. Let $v_{1}$ and $v_{2}$
be the children of the topmost $+$ gate. The proof is in two
steps. First, we reduce to the case when $|\Var(v_{1})| = |\Var(v_{2})|
= 4$. Then we use a case analysis to show that at least one of the
aforementioned conditions hold true. 
In both steps, we will repeatedly use
Proposition~\ref{prop:3-var-ROP}, which showed that any $3$-variate
ROP can be reduced to a linear polynomial by substituting a single
variable with a field constant. We now proceed with the proof. 

Suppose $|\Var(v_{1})| \leq 3$. Applying
Proposition~\ref{prop:3-var-ROP} first to $v_{1}$ and then to the
resulting restriction of $v_{2}$, one can see that there exist $i,
j \in [4]$ and $a, b \in \F$ such that $g\mid_{x_{i}=a, x_{j}=b}$ is a
linear polynomial. So condition $C1'$ is satisfied.

Now assume that $|\Var(v_{1})| = |\Var(v_{2})| = 4$.
Depending on the type of gates of $v_{1}$ and $v_{2}$, we consider $3$
cases.
\medskip
 
\noindent
\textbf{Case 1:} Both $v_{1}$ and $v_{2}$ are $\times$ gates. Then $g$
can be represented as $ M_{1}\cdot M_{2} + M_{3} \cdot M_{4}$ where
$(M_{1}, M_{2})$ and $(M_{3}, M_{4})$ are variable-disjoint
ROPs.

Suppose that for some $i$,  $|\Var(M_{i})| = 1$. Then,
$g \mid_{M_{i} \to 0}$ is a $3$-variate restriction of $f$ and is
clearly an ROP. Applying Proposition~\ref{prop:3-var-ROP} to this
restriction, we see that condition $C1'$ holds.

Otherwise each $M_{i}$ has $|\Var(M_{i})| = 2$.

Suppose $(M_{1}, M_{2})$ and $(M_{3}, M_{4})$ define distinct
partitions of the variable set.  Assume without loss of generality that $g = M_{1}(x_{1},
x_{2}) \cdot M_{2}(x_{3}, x_{4}) + M_{3}(x_{1}, x_{3}) \cdot
M_{4}(x_{2}, x_{4})$.  If all $M_{i}$s are linear forms, it is clear
that condition $C3'$ holds. If not, assume that $M_{1}$ is of the form
$l_{1}(x_{1}) \cdot m_{1}(x_{2}) + c_{1}$ where $l_{1}, m_{1}$ are
linear forms and $c_{1} \in \F$. Now $g\mid_{l_{1} \to 0} = c_{1}
\cdot M_{2}(x_{3}, x_{4}) + M_{3}'(x_{3}) \cdot M_{4}(x_{2},
x_{4})$. Either set $x_{3}$ to make $M_{3}'$ zero, or, if that is not
possible because $M_{3}'$ is a
non-zero field constant, then set $x_{4} \to b$ where $b \in \F$.  In both
cases, by setting at most 2 variables, we obtain a linear polynomial,
so $C1'$ holds.   

Otherwise, $(M_{1}, M_{2})$ and $(M_{3}, M_{4})$ define the same
partition of the variable set. Assume without loss of generality that $g = M_{1}(x_{1},
x_{2}) \cdot M_{2}(x_{3}, x_{4}) + M_{3}(x_{1}, x_{2}) \cdot
M_{4}(x_{3}, x_{4})$. If one of the $M_{i}$s is linear, say without loss of generality that
$M_{1}$ is a linear form, then $g \mid_{M_{4} \to 0}$ is a 2-variate
restriction which is also a linear form, so $C1'$ holds. Otherwise,
none of the $M_{i}$s is a linear form. Then each $M_{i}$ can be
represented as $l_{i} \cdot m_{i} + c_{i}$ where $l_{i}, m_{i}$ are
univariate linear forms and $c_{i} \in \F$. We consider a $2$-variate
restriction which sets $l_{1}$ and $m_{4}$ to $0$. (Note that
$\Var(l_{1}) \cap \Var(m_{4}) = \emptyset$.) Then the resulting
polynomial is a linear form, so $C1'$ holds.
\medskip

\noindent
\textbf{Case 2:} Both $v_{1}$ and $v_{2}$ are $+$ gates. Then $g$ can
be written as $f = M_{1} + M_{2} + M_{3} + M_{4}$ where $(M_{1},
M_{2})$ and $(M_{3}, M_{4})$ are variable-disjoint ROPs.

Suppose $(M_{1}, M_{2})$ and $(M_{3}, M_{4})$ define distinct
partitions of the variable set.

Suppose further that there exists $M_{i}$ such that $|\Var(M_{i})| =
1$.  Without loss Of generality, $\Var(M_1) = \{x_1\}$, $\{x_1,x_2\} \subseteq \Var(M_3)$,
and $x_3 \in \Var(M_4)$. 
Any setting to $x_2$ and $x_4$ results in a
linear polynomial, so $C1'$ holds.

So assume without loss of generality that $g = M_{1}(x_{1}, x_{2}) + M_{2}(x_{3}, x_{4}) +
 M_{3}(x_{1}, x_{3}) + M_{4}(x_{2}, x_{4})$. Then for $a, b \in \F$,
 $g\mid_{x_{1}=a, x_{4}=b}$ is a linear polynomial, so $C1'$ holds.
 
Otherwise, $(M_{1}, M_{2})$ and $(M_{3}, M_{4})$ define the same
partition of the variable set. Again, if say $|\Var(M_{1})| =1$, then
setting two variables from $M_2$ shows that $C1'$ holds. So assume
without loss of generality that $g = M_{1}(x_{1}, x_{2}) + M_{2}(x_{3}, x_{4}) +
M_{3}(x_{1}, x_{2}) + M_{4}(x_{3}, x_{4})$. Then for $a, b \in \F$,
$g\mid_{x_{1}=a, x_{3}=b}$ is a linear polynomial, so again $C1'$
holds.
\medskip

\noindent
\textbf{Case 3:} One of $v_{1}, v_{2}$ is a $+$ gate and the other is
a $\times$ gate. Then $g$ can be written as $g = M_{1} + M_{2} + M_{3}
\cdot M_{4}$ where $(M_{1}, M_{2})$ and $(M_{3}, M_{4})$ are
variable-disjoint ROPs. Suppose that $|\Var(M_{3})| = 1$. Then
$g \mid_{ M_{3} \to 0}$ is a $3$-variate restriction which is a
ROP. Using Proposition~\ref{prop:3-var-ROP}, we get a $2$-variate
restriction of $g$ which is also linear, so $C1'$ holds. The same
argument works when $|\Var(M_{4})| = 1$. So assume that $M_{3}$ and
$M_{4}$ are bivariate polynomials.

Suppose that  $(M_{1}, M_{2})$ and $(M_{3}, M_{4})$ define distinct
partitions of the variable set. 
Assume without loss of generality that $g = M_{1} + M_{2} + M_{3}(x_{1}, x_{2}) \cdot
M_{4}(x_{3}, x_{4})$, and $x_3, x_4$ are separated by $M_1, M_2$.
Then $g\mid_{M_{3} \to 0}$ is a $2$-variate restriction which is also
linear, so $C1'$ holds.
  
Otherwise $(M_{1}, M_{2})$ and $(M_{3}, M_{4})$ define the same
partition of the variable set. Assume without loss of generality that  $g = M_{1}(x_{1},
x_{2}) + M_{2}(x_{3}, x_{4}) + M_{3}(x_{1}, x_{2}) \cdot M_{4}(x_{3},
x_{4})$.  If $M_{1}$ (or $M_{2}$) is a linear form, then consider a
$2$-variate restriction of $g$ which sets $M_{4}$ (or $M_{3}$) to
$0$. The resulting polynomial is a linear form. Similarly if $M_{3}$
(or $M_{4}$) is of the form $l \cdot m + c$ where $l, m$ are
univariate linear forms, then we consider a $2$-variate restriction
which sets $l$ to $0$ and some $x_{i} \in \Var(M_{4})$ to a field
constant. The resulting polynomial again is a linear form.  In all
these cases, $C1'$  holds. 
 
The only case that remains is that $M_{3}$ and $M_{4}$ are linear
forms while $M_{1}$ and $M_{2}$ are not. Assume that
$M_1 = (a_1x_1+b_1)(a_2x_2+b_2)+c$ and
$M_{3} = a_{3} x_{1} + b_{3} x_{2} + c_{3}$. Then
$\partial_{x_{1}}(g) = a_1(a_2x_{2}+b_2) + a_{3} M_{4}$  and
$\partial_{x_{2}}(g) = (a_1 x_{1} +b_1)a_2 + b_{3} M_{4}$.
It follows that
$b_{3} \cdot \partial_{x_{1}}(g) - a_{3} \cdot \partial_{x_{2}}(g) +
a_1a_2a_3 x_1 - a_1a_2b_3x_2 = a_1b_2b_3-b_1a_2a_3 \in \F$, 
and hence the polynomials $x_{1}$, $x_{2}$,
$\partial_{x_{1}}(g)$, $\partial_{x_{2}}(g)$ and $1$ are linearly
dependent. Therefore, condition $C2'$ 
of the lemma is satisfied. 
\qed \end{proof}

\begin{lemma}
  \label{lem:no-structure}
If $\alpha, \beta, \gamma$ satisfy conditions $C1,C2,C3$ from the
statement of Theorem~\ref{thm:hard-4-variate}, then the polynomial
$f^{\alpha,\beta,\gamma}$ does not satisfy any of the properties
$C1',C2',C3'$ from Lemma~\ref{lem:structure-2-sum-4-variate}.
\end{lemma}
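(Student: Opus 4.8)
The plan is to verify that if $\alpha,\beta,\gamma$ satisfy C1, C2, C3 then $f=f^{\alpha,\beta,\gamma}$ fails each of $C1',C2',C3'$ in turn. \emph{$C1'$:} for any pair $\{x_i,x_j\}$ the single degree-$2$ monomial on the remaining two variables, $x_kx_\ell$, keeps its coefficient — one of $\alpha,\beta,\gamma$ — under every substitution $x_i\mapsto a$, $x_j\mapsto b$; this coefficient is nonzero by C1, so no bivariate restriction of $f$ is linear.

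\emph{$C2'$:} I would record that each partial derivative of $f$ is a homogeneous linear form in the other three variables (e.g.\ $\partial_{x_1}f=\alpha x_2+\beta x_3+\gamma x_4$). Since these, together with $x_i$ and $x_j$, are homogeneous of degree $1$ while $1$ has degree $0$, any linear dependence among $\{x_i,x_j,\partial_{x_i}f,\partial_{x_j}f,1\}$ forces the coefficient of $1$ to vanish, hence a dependence among the four linear forms alone. For each pair $\{i,j\}$ I would write down the associated $4\times 4$ coefficient matrix and, using the two standard-basis rows to clear two of the columns, reduce its determinant to a $2\times 2$ determinant equal, up to sign, to one of $\alpha^2-\beta^2$, $\beta^2-\gamma^2$, $\gamma^2-\alpha^2$ (which one depends only on the ``type'' of the pair). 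By C2 these are all nonzero, so $C2'$ cannot hold.

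\emph{$C3'$:} suppose $f=l_1l_2+l_3l_4$ with $(l_1,l_2)$ and $(l_3,l_4)$ variable-disjoint. Let $E_1,E_2$ be the sets of degree-$2$ monomials occurring in $l_1l_2$ and $l_3l_4$; each $E_t$ is the edge set of a complete bipartite graph on (a subset of) $\{x_1,\dots,x_4\}$ and so has at most $4$ edges. By C1 all six pair-monomials occur in $f$, hence $E_1\cup E_2$ is all of $E(K_4)$. A short enumeration of the complete bipartite subgraphs of $K_4$ shows this is possible only when each $E_t$ is a $K_{2,2}$ on all four variables and the two induced partitions of $\{1,2,3,4\}$ into pairs are distinct. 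Because C1, C2, C3 are invariant under permutations of $(\alpha,\beta,\gamma)$, and renaming variables both permutes the three pair-partitions and permutes the triple $(d_1,d_2,d_3)$, I may assume after relabelling that $f=l_1(x_1,x_2)\,l_2(x_3,x_4)+l_3(x_1,x_3)\,l_4(x_2,x_4)$. Then Lemma~\ref{lem:commutator} gives $l_2\mid\triangle_{12}(f)$, while a direct computation yields
\[\triangle_{12}(f) = -\beta\gamma\,x_3^2 + (\alpha^2-\beta^2-\gamma^2)\,x_3x_4 - \beta\gamma\,x_4^2,\]
a nonzero homogeneous quadratic since $\beta\gamma\neq 0$. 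Comparing the coefficients of $x_1x_3$ and $x_2x_4$ in $f$ shows $l_2$ has nonzero $x_3$- and $x_4$-coefficients, so it is a genuine non-constant linear form; a non-constant linear form dividing a nonzero homogeneous quadratic must be homogeneous, and its existence forces that quadratic to split into two homogeneous linear factors over $\F$. But $\triangle_{12}(f)$ splits over $\F$ precisely when its discriminant $(\alpha^2-\beta^2-\gamma^2)^2-(2\beta\gamma)^2=d_1$ is a square in $\F$, which C3 forbids. Hence $C3'$ fails as well, so $f$ is not a sum of two ROPs.

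The steps for $C1'$ and $C2'$ are routine; essentially all the difficulty lies in $C3'$. I expect the main obstacle to be the combinatorial argument that pins down the variable-partition structure (covering $K_4$ by two complete bipartite graphs), together with organising the symmetry reduction cleanly, and finally checking that the commutator $\triangle_{12}(f)$ matches exactly the quantity $d_1$ appearing in condition C3.
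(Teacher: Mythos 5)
Your proof is correct and follows essentially the same route as the paper: the $C1'$ argument is identical, the $C2'$ argument is the same linear-dependence-of-vectors computation (your homogeneity observation and determinant reduction are a clean repackaging of it), and the $C3'$ argument uses the identical chain of ideas — rule out same/degenerate partitions, reduce WLOG to $f=l_1(x_1,x_2)l_2(x_3,x_4)+l_3(x_1,x_3)l_4(x_2,x_4)$, invoke Lemma~\ref{lem:commutator}, compute $\triangle_{12}f$, and conclude that $d_1$ must be a square. The only cosmetic differences are your explicit $K_4$-covering formulation of the partition step and phrasing the final contradiction via the discriminant rather than the paper's explicit root $2\beta\gamma\delta-(\alpha^2-\beta^2-\gamma^2)$.
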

\begin{proof}
  \noindent $\mathbf{C1 \Rightarrow \neg C1'}$: Since
  $\alpha\beta\gamma\neq 0$, $f$ contains all possible degree $2$
  monomials. Hence after setting $x_i=a$ and $x_j=b$, the monomial
  $x_{k}x_{l}$ where $k,l \in [4] \backslash \{i, j\}$ still survives.
\medskip

  \noindent $\mathbf{C2 \Rightarrow \neg C2'}$: The proof is by
  contradiction. Assume to the contrary that for some $i,j$, without loss of generality say
  for $i=1$ and $j=2$, the polynomials $x_{1}, x_{2},
  \partial_{x_{1}}(f), \partial_{x_{2}}(f), 1$ are linearly dependent.
  Note that $\partial_{x_{1}}(f) = \alpha x_{2} + \beta x_{3} + \gamma
  x_{4}$ and $\partial_{x_{2}}(f) = \alpha x_{1} + \gamma x_{3} +
  \beta x_{4}$. This implies that the vectors $ (1, 0, 0, 0, 0)$, $(0, 1,
  0, 0, 0)$, $(0, \alpha, \beta, \gamma, 0)$, $(\alpha, 0, \gamma,
  \beta, 0)$ and $(0,0,0,0,1)$ are linearly dependent. This further implies that the
  vectors $(\beta, \gamma)$ and $(\gamma, \beta)$ are linearly
  dependent. Therefore, $\beta = \pm \gamma$, contradicting C2.
  \medskip

  \noindent $\mathbf{C1 \wedge C2 \wedge C3 \Rightarrow \neg C3'}$:
  Suppose, to the contrary, that $C3'$ holds. That is, $f$ can be written as
  $f = l_{1} \cdot l_{2} + l_{3} \cdot l_{4}$ where
  $(l_{1}, l_{2})$ and $(l_{3}, l_{4})$
  are variable-disjoint linear forms. By the preceding arguments, we
  know that $f$ does not satisfy $C1'$ or $C2'$. 

First consider the case when $(l_{1}, l_{2})$ and $(l_{3}, l_{4})$
define the same partition of the variable set. Assume without loss of generality that
$\Var(l_1) = \Var(l_3)$, 
$\Var(l_2) = \Var(l_4)$, and $|\Var(l_1)| \le 2$. Setting the
variables in $l_1$ to any field constants yields a linear form, so $f$
satisfies C1', a contradiction.  

Hence it must be the case that $(l_{1}, l_{2})$ and $(l_{3}, l_{4})$
define different partitions of the variable set. Since all degree-2
monomials are present in $f$, each pair $x_i$, $x_j$ must be separated
by at least one of the two partitions. This implies that both
partitions have exactly 2 variables in each part.  Assume without loss
of generality that $f = l_{1}(x_{1}, x_{2}) \cdot l_{2}(x_{3}, x_{4})
+ l_{3}(x_{1}, x_{3}) \cdot l_{4}(x_{2}, x_{4})$.

At this point, we use properties of the commutator of $f$; recall
Definition~\ref{def:commutator}.  By
Lemma \ref{lem:commutator}, we know that  $l_{2}$ divides
$\triangle_{12}f$. We compute $\triangle_{12}f$ explicitly for our
candidate polynomial: 
\begin{align*}
\begin{aligned}
 \triangle_{12}f &= ( \alpha x_{3} x_{4})(\alpha + (\beta +
 \gamma)(x_{3} + x_{4}) + \alpha x_{3} x_{4}) \\
 &~~~
 - (\beta x_{4} + \gamma x_{3} + \alpha x_{3}x_{4})(\beta x_{3} +
 \gamma x_{4} + \alpha  x_{3}x_{4}) \\ 
 &= -\beta\gamma(x_{3}^{2} + x_{4}^{2}) +
 (\alpha^{2}-\beta^{2}-\gamma^{2}) x_{3}x_{4}\\ 
 \end{aligned}
 \end{align*}
Since $l_2$ divides $\triangle_{12}f$, $\triangle_{12}f$ is not
irreducible but is the product of two linear factors. Since
$\triangle_{12}f(0, 0) = 0$, at least one of the linear factors of
$\triangle_{12}f$ must vanish at $(0, 0)$. Let $x_{3} - \delta x_{4}$
be such a factor. Then $\triangle_{12}(f)$ vanishes not only at
$(0,0)$, but whenever $x_3 = \delta x_4$. Substituting $x_{3} = \delta
x_{4}$ in $\triangle_{12}f$, we get
 \begin{align*}
 \begin{aligned}
   -\delta^{2}\beta\gamma - \beta\gamma
   + \delta(\alpha^{2} - \beta^{2}  - \gamma^{2}) = 0 
 \end{aligned}
 \end{align*}
 Hence $\delta$ is of the form
 \[
\delta = \frac{- (\alpha^{2} - \beta^{2}  - \gamma^{2}) \pm \sqrt{(\alpha^{2}
 - \beta^{2}  - \gamma^{2})^2 - 4\beta^2\gamma^2 } }{-2\beta\gamma}
 \]
Hence $2\beta\gamma\delta -
(\alpha^{2} - \beta^{2}  - \gamma^{2})$ is a root of the equation
 $X^2-d_1 = 0$, contradicting the assumption that C3 holds. 

Hence it must  be the case that $C3'$ does not hold.
\qed\end{proof}

With this, the proof of Theorem~\ref{thm:hard-4-variate} is complete.

The conditions imposed on $\alpha, \beta, \gamma$ in
Theorem~\ref{thm:hard-4-variate} are tight and irredundant. Below we
give some explicit examples over the field of reals.

\begin{enumerate}
\item $f = 2(x_{1}x_{2} + x_{3}x_{4}) + 2(x_{1}x_{3} + x_{2}x_{4}) +
  3(x_{1}x_{4} + x_{2}x_{3})$ satisfies conditions
  C1 and C3 from the Theorem but not C2; $\alpha=\beta$. A
  $\sum^2\cdot \ROP$ representation for $f$ is $f = 2(x_{1} +
  x_{4})(x_{2} + x_{3}) + 3(x_{1}x_{4} + x_{2}x_{3})$.
\item $f = 2(x_{1}x_{2} + x_{3}x_{4}) - 2(x_{1}x_{3} + x_{2}x_{4}) +
  3(x_{1}x_{4} + x_{2}x_{3})$ satisfies conditions
  C1 and C3 but not C2; $\alpha=-\beta$. A
  $\sum^2\cdot \ROP$ representation for $f$ is $f = 2(x_{1} -
  x_{4})(x_{2} - x_{3}) +   3(x_{1}x_{4} + x_{2}x_{3})$.
\item $f = (x_{1}x_{2} + x_{3}x_{4}) + 2(x_{1}x_{3} + x_{2}x_{4}) +
  3(x_{1}x_{4} + x_{2}x_{3})$  satisfies conditions
  C1 and C2 but not C3. A
  $\sum^2\cdot \ROP$ representation for $f$ is 
  $f = (x_{1} + x_{3})(x_{2} + x_{4}) +
  2(x_{1} + x_{2}) (x_{3} + x_{4})$.
\end{enumerate}

\section{Conclusions}
\label{sec:concl}

\begin{enumerate}
\item We have seen in Proposition~\ref{prop:easy-upper-bound} that
  every $n$-variate multilinear polynomial ($n \ge 4$) can be written
  as the sum of $3 \times 2^{n-4}$ ROPs. A counting argument shows
  that there exist multilinear polynomials $f$ requiring exponentially
  many ROPs summands; if $f \in \sum^k\cdot \ROP$ then $k =
  \Omega(2^n/n^2)$.  Our general upper bound on $k$ is $O(2^n)$,
  leaving a small gap between the lower and upper bound.  What is the
  true tight bound? Can we find explicit polynomials that require
  exponentially large $k$ in any $\sum^{k}\cdot \ROP$ expression?
\item We have shown in Theorem~\ref{thm:hierarchy} that for each $k$,
  $\sum^k\cdot \ROP$ can be separated from $\sum^{k-1}\cdot \ROP$ by a
  polynomial on $2k-1$ variables.  Can we separate these classes with
  fewer variables? Note that any separating polynomial must have
  $\Omega(\log k)$ variables.
\item In particular, can 4-variate multilinear polynomials separate
  sums of 3 ROPs from sums of 2 ROPs over every field? If not, what is
  an explicit example?
\end{enumerate}

\bibliographystyle{plain}
\bibliography{HOR}

\end{document}